\theoremstyle{plain}
\newtheorem{theorem}{Theorem}
\newtheorem{lemma}[theorem]{Lemma}
\newtheorem{proposition}[theorem]{Proposition}
\newtheorem{property}[theorem]{Property}
\newtheorem{corollary}[theorem]{Corollary}
\theoremstyle{definition}
\theoremstyle{remark}
\newtheorem{observation}[theorem]{Observation}
\newcommand{\LL}{\mathcal{L}}
\newcommand{\HH}{\mathcal{H}}
\newcommand{\dBE}{de Bruijn-Erd\H{o}s property}
\newcommand{\Line}{\overline}
\tikzstyle{vertex}=[circle,fill=black,minimum size=3pt,inner sep=0pt]
\tikzstyle{spath}=[decorate, decoration={snake, segment length=1mm, amplitude=.5mm}]
\title{Graphs with no induced house nor induced hole have the de
  Bruijn-Erd\H{o}s property \footnote{Partially supported by Basal
    program AFB170001, CONICYT Fondecyt/Regular 1180994, and
    programs ANR-17-CE40-0015 and ANR-19-CE48-0016 from the French
    National Research Agency (ANR).}  }
\author{Pierre Aboulker$^1$, Laurent Beaudou$^2$, Martín Matamala$^{3}$ and José Zamora$^{4}$\\
\small ($1$) DIENS, \'Ecole normale sup\'erieure, CNRS, PSL University, Paris, France\\
\small ($2$) Higher School of Economics, Moscow, Russian Federation\\
\small ($3$) DIM-CMM, Universidad de Chile, Santiago, Chile \\
\small ($4$) Depto. Matemáticas, Universidad Andres Bello, Santiago, Chile}
\date{\today}
\begin{document}

\maketitle

\begin{abstract}
  A set of $n$ points in the plane which are not all collinear
  defines at least $n$ distinct lines. Chen and Chvátal
  conjectured in 2008 that a similar result can be achieved in the
  broader context of finite metric spaces. This conjecture remains
  open even for graph metrics. In this article we prove that
  graphs with no induced house nor induced cycle of length at
  least~5 verify the desired property. We focus on lines generated
  by vertices at distance at most 2, define a new notion of ``good
  pairs'' that might have application in larger families, and
  finally use a discharging technique to count lines in
  irreducible graphs.
\end{abstract}

\section{Introduction}
Given a set of $n$ points in the Euclidean plane, they are all
collinear or they define at least $n$ distinct lines. This result
is a corollary of Sylvester-Gallai Theorem (suggested by Sylvester
in the late nineteenth century~\cite{sylvester_1893} and proved by
Gallai some forty years later as reported by
Erd\H{o}s~\cite{erdos_1982}).

Can this property of the Euclidean plane be satisfied by more
general metric spaces? We first need to specify the notion of line
in a general metric space $(V,d)$. We say that a point $z$ in $V$
is {\em between} $u$ and $v$ (points in $V$) if
$d(u,v)=d(u,z)+d(z,v)$. Given two points $u$ and $v$, the set of
points between $u$ and $v$ is the {\em interval} defined by $u$
and $v$, denoted $I(u,v)$. Note that $u$ and $v$ are in $I(u,v)$.
When the order is not relevant, we may say that three points $u,v$
and $z$ are {\em collinear}. This means that one of them is
between the others. The {\em line} defined by two points $u$ and
$v$ is the set of all points $z$ such that $u,v$ and $z$ are
collinear. It is denoted by $\Line{uv}$. A line is {\em universal}
if it equals $V$. In this wording, Sylvester-Gallai Theorem states
that $n$ points in the Euclidean plane define at least $n$
distinct lines or form a universal line. In 1948, de Bruijn and
Erd\H{o}s studied a combinatorial
problem~\cite{debruijn_erdos_1948} implying Sylvester-Gallai
Theorem. This explains partly the name of the following property.

A metric space $M=(V,d)$ satisfies the {\em de Bruijn-Erd\H{o}s
  property} if \begin{equation}
  \label{con:dBE}
  \tag{DBE} M \text{ has a universal line,
    or at least } |V| \text{ distinct lines.}
\end{equation}

In 2008, Chen and Chvátal~\cite[Question 1]{chen_chvatal_2008}
wondered if all finite metric spaces satisfy the \dBE. By lack of
counterexample, this question has now grown to be a conjecture:
the Chen-Chvátal conjecture. A \emph{graph metric} is a metric
space that arises from a graph: the ground set is the set of
vertices and the distance between two vertices corresponds to the
length (number of edges) of a shortest path linking these
vertices. Chen-Chvátal conjecture remains open even for graph
metrics. In recent years, it has been proved that several families
of metric spaces satisfy the \dBE: every metric space with
distances in $\{0,1,2\}$
\cite{chiniforooshan_chvatal_2011,chvatal_2014}; graph metrics
induced by graphs which are chordal~\cite{beaudou_bondy_2015} or
distance hereditary~\cite{aboulker_kapadia_2015}. More generally,
any graph metric defined by a graph $G$ such that every induced
subgraph of $G$ is either a chordal graph, has a cut-vertex or a
non-trivial module~\cite{aboulker_matamala_2018}. Several
strengthenings of the initial conjecture have been
suggested~\cite{matamala_zamora_2020}. For a good overview of
previous results and open problems, one may read the enjoyable
survey written by Chvátal in 2018~\cite{chvatal_2018}.

In this paper, we prove that graph metrics for so-called \{house,
hole\}-free graphs satisfy the \dBE. The {\em house} is the graph
on five vertices obtained by adding one chord to a 5-cycle. A {\em
  hole} is a cycle on at least five vertices. The class of
\{house, hole\}-free graphs consists in those graphs that do not
admit a house or a hole as an induced subgraph. The main result is
thus the following theorem.

\begin{theorem}
  \label{thm:main}
  The class of \{house, hole\}-free graphs satisfies the \dBE.
\end{theorem}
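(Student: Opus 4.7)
The plan is to proceed by induction on $|V(G)|$, with small cases handled directly. For the inductive step, the first move is to reduce to graphs that are \emph{irreducible} in a suitable sense: free of non-trivial modules, cut-vertices, and pairs of twin (or otherwise ``redundant'') vertices. Existing reduction machinery from \cite{aboulker_matamala_2018} and related work handles each such case by splitting $G$ into strictly smaller \{house, hole\}-free graphs to which the induction hypothesis applies, or by producing a universal line outright. Since the class of \{house, hole\}-free graphs is closed under taking induced subgraphs, the reductions remain inside the class. From here on, we may assume $G$ is irreducible.

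The second step is to exploit the local structure of $G$ by studying only lines $\Line{uv}$ generated by pairs with $d(u,v) \le 2$. I would introduce a notion of \emph{good pair}: a pair $(u,v)$ at distance at most $2$ such that the line $\Line{uv}$ is witnessed by $v$ in an injective fashion (for example, $v$ can be recovered from $\Line{uv}$ together with an auxiliary combinatorial datum). The key structural claim to establish is that in an irreducible \{house, hole\}-free graph, every vertex belongs to at least one good pair. Hole-freeness forces every induced cycle to have length $3$ or $4$, and house-freeness rules out the primary small obstruction when comparing two lines $\Line{uv}$ and $\Line{uv'}$ through a common endpoint $u$; together these forbidden subgraphs should be enough to certify a good partner at each vertex by a case analysis on small configurations around the pair.

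Finally, the counting is performed by discharging. Each vertex starts with charge $1$, so the total charge is $|V(G)|$. I would redistribute this charge to lines by sending each vertex's unit along the good pairs containing it, with rules tuned so that no line receives more than $1$ unit of charge in total. Since the total distributed charge is $|V(G)|$ and each line collects at most $1$, the number of distinct lines is at least $|V(G)|$; a universal line, if it exists, resolves the conclusion directly. The main obstacle I expect is exactly this balancing step: guaranteeing that the discharging rules do not overcharge a single line requires a careful structural understanding of which vertices can simultaneously lie on the same line $\Line{uv}$ in a \{house, hole\}-free graph, and of how the collections of good pairs around different vertices can interact. This is where the full strength of forbidding both the house and every long hole is needed, since each forbidden subgraph alone is known to permit configurations where naive discharging schemes collapse.
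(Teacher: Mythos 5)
Your outline matches the paper's strategy in broad strokes (induction, restricting to lines generated by pairs at distance at most~2, a notion of ``good pair'', discharging), but every step that carries actual mathematical weight is either missing or asserted without justification, and one step is based on a reduction that is not available. Concretely: the claim that ``existing reduction machinery'' from~\cite{aboulker_matamala_2018} lets you assume $G$ has no non-trivial module, no cut-vertex and no twins is an overstatement. The induction hypothesis here must be strengthened to the statement that $G$ has a universal line generated by a pair at distance at most~$2$ or at least $n$ lines, and a general module or cut-vertex reduction compatible with that strengthened statement is not known (the cited paper proves a theorem about hereditarily decomposable graphs, not a plug-in reduction). The paper only needs, and only proves, a reduction for one very specific configuration --- a $C_4$-module --- and even that requires a careful argument (Proposition~\ref{prop:red}) tracking which universal $2$-pairs survive the deletion and exhibiting a genuinely new line $\Line{x_0w}$ when none does. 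Your ``irreducibility'' assumption is therefore not earned, and in fact is not needed: the actual counting is done on graphs that may well have cut-vertices and non-trivial modules other than $C_4$-modules.

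The second genuine gap is the discharging balance, which you yourself flag as ``the main obstacle'': you need no line to collect more than one unit of charge, but many pairs can generate the same line, and controlling this is precisely the content of the paper's Sections~\ref{sec:l1}--\ref{sec:l1l2}. The needed ingredients are: (i) a concrete definition of good pair (distance exactly~$2$ with a common neighbour $z$ satisfying $\Line{uz}=\Line{zv}$ --- your ``injective witness'' formulation is not a definition one can compute with); (ii) the fact that the edges generating a fixed non-universal line in $\LL_1$ induce a complete bipartite subgraph (Corollary~\ref{coro:bipl1}); (iii) the fact that, absent a $C_4$-module, the good pairs generating a fixed line form an extended star with a common center (Corollary~\ref{coro:starl2}); and (iv) the disjointness of $\LL_1$ and $\LL_2$ (Proposition~\ref{prop:l1l2}), without which lines could be double-counted. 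With these in hand the paper discharges in the opposite direction --- weight~$1$ per line, pushed to the star center for $\LL_2$ and split in halves between the two sides of the bipartition for $\LL_1$, then shown to give every vertex at least~$1$ because every edge supports a triangle. None of (ii)--(iv) appears in your proposal, and without them the ``at most one unit per line'' requirement cannot be verified, so the counting does not go through as written.
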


It actually answers Problem 3 of Chvátal's
survey~\cite{chvatal_2018}. From now on, we let $\HH$ denote the
class of \{house, hole\}-free graphs.

\paragraph{The proof in a nutshell}
We shall prove Theorem~\ref{thm:main} by induction.  As usual for
inductive proofs, we need to adjust the induction hypothesis since
it is both what we want to prove (thus we would fancy a weak
statement to lighten the proof) and our hypothesis for proving
(thus we look for a bold and strong statement to ease the
deductive process). In our case, we strengthen the original
statement by only considering lines generated by vertices at
distance at most~2 from each other. We prove that any graph $G$ in
$\HH$ on $n$ vertices satisfies the following property.
\begin{equation}
  \label{eq:prop}
  \tag{DBE-2} G \text{ has a universal line }\Line{uv} \text{,
    with }d_G(u,v)\leq 2\\ \text{, or at least } n \text{ lines.}
\end{equation}
To that end, we study two families of lines: lines generated by
pairs at distance~1 (genuinely called $\LL_1$), and lines
generated by {\em some} pairs at distance exactly~2 (similarly
called $\LL_2$). We focus on those pairs that generate the same
line. For lines in $\LL_1$ such pairs form a complete bipartite
graph (Section~\ref{sec:l1}). For lines in $\LL_2$, they are
mostly arranged in a star manner (in other words they have a
\\
``center'') except if there is a $C_4$-module in the graph
(Section~\ref{sec:l2}). Then we observe that these two families
are disjoint when there is no universal line
(Section~\ref{sec:l1l2}).  Finally, after proving that a minimal
counter-example to \eqref{eq:prop} cannot have a $C_4$-module
(Section~\ref{sec:noC4}), the last part of the proof deals with
the actual counting of lines for graphs in $\HH$ with no
$C_4$-module (Section~\ref{sec:count}). By use of discharging
techniques (which is nothing but a sophisticated double counting
argument) we give a weight of~1 to every line in $\LL_1$ and
$\LL_2$ and distribute these weights to vertices (given to the
center of the star for lines in $\LL_2$ and split into two halves
for lines in $\LL_1$). Finally we show that every vertex has
received at least a weight of~1 after this process. Thus, the
number of lines is no less than the number of vertices.

\section{Preliminaries, notations, previous work}

In this section, we introduce the tools needed for a smooth
understanding of the proof. All considered graphs are simple,
finite and connected. We assume basic knowledge in graph
terminology. Let us specify that a {\em $C_4$-module} in a graph
$G$ is an induced subgraph isomorphic to a 4-cycle and such that
every other vertex of $G$ is either complete or anticomplete to
these four vertices.

\subsection{Pairs generating the same line}
\label{sub:aboulker}

The main issue for us is when many pairs of vertices generate the
same line. To that matter, we shall make heavy use of a recent
result by Aboulker, Chen, Huzhang, Kapadia and
Supko~\cite{aboulker_chen_2016}. They describe the structure of
pairs generating the same line and formalize their result in the
framework of pseudometric betweenness~\cite[Section
  6]{aboulker_chen_2016}. This framework is not our focus so let
us rephrase some of their results for the case of graphs.

Let $G$ be a graph. A sequence $(a,b,c,d)$ of four distinct
vertices of $G$ forms a {\em parallelogram} if $b \in I(a,c), c
\in I(b,d), d \in I(c,a)$ and $a \in I(d,b)$.  Now let $uv$ and
$xy$ be two pairs of vertices of $G$ (not necessarily disjoint
pairs). Authors in~\cite{aboulker_chen_2016} define three types
of relation. Pairs $uv$ and $xy$ are said:

\begin{itemize}
\item in {\em $\alpha$-relation} if there is a shortest path in
  $G$ containing $\{u,v,x,y\}$\footnote{This set may have order
  3.},
\item in {\em $\beta$-relation} if $(u,v,x,y)$ or $(u,v,y,x)$
  forms a parallelogram and $d_G(u,v)=d_G(x,y)=1$,
\item in {\em $\gamma$-relation} if $(u,x,v,y)$ forms a parallelogram
  and $\Line{uv}=I(u,v) = I(x,y) = \Line{xy}$.
\end{itemize}

\begin{theorem}[{Rephrasing of~\cite[Lemma 6.6]{aboulker_chen_2016}}]
    \label{thm:eqlines}
    In a graph $G$, given two pairs of vertices $xy$ and $uv$, if
    lines $\Line{xy}$ and $\Line{uv}$ are equal, then those pairs
    are $\alpha$-related, $\beta$-related, or $\gamma$-related.
\end{theorem}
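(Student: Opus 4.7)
The plan is to turn the line-equality hypothesis into a finite set of betweenness statements and classify them. From $\Line{xy}=\Line{uv}$ we read off that $u,v\in\Line{xy}$ and $x,y\in\Line{uv}$; each of these memberships is a collinearity, so for each of $u$, $v$, $x$, $y$ exactly one of three possible betweennesses holds (e.g.\ for $u$, either $u\in I(x,y)$, or $x\in I(u,y)$, or $y\in I(u,x)$). First I would fix this ``betweenness signature'' for the four vertices and run a case analysis, using only the triangle inequality and the definition of intervals.

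The aligned sub-cases are disposed of first: whenever the signature is compatible with placing $\{u,v,x,y\}$ in some order along a single shortest path (both vertices of one pair lying in the interval of the other, or a nested/telescoping pattern), the four vertices share a common shortest path and the pairs are $\alpha$-related. Everything else is misaligned, i.e.\ neither pair is contained in the interval of the other. In that regime I would show that the only signatures surviving the triangle inequality are cyclic ones, such as $u\in I(x,\cdot)$, $y\in I(u,\cdot)$, $v\in I(y,\cdot)$, $x\in I(v,\cdot)$, and that each such cycle yields directly the four interval inclusions defining a parallelogram on $(u,v,x,y)$, $(u,v,y,x)$, or $(u,x,v,y)$.

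Once a parallelogram is extracted, I would split on distances. If $d(u,v)=d(x,y)=1$, we are exactly in the $\beta$ case. Otherwise the parallelogram must be promoted to the stronger conclusion $\Line{uv}=I(u,v)=I(x,y)$ demanded by $\gamma$: using $\Line{xy}=\Line{uv}$ I would show any $z$ on the common line is collinear with both pairs and, by the parallelogram structure, cannot lie strictly outside $I(u,v)$ without violating one of the extracted betweennesses; symmetrically for $I(x,y)$. The equality $I(u,v)=I(x,y)$ then follows from the parallelogram plus the hypothesis that the two intervals generate the same line.

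The main obstacle I anticipate is the bookkeeping of the misaligned case: many a priori betweenness signatures must be ruled out by triangle-inequality contradictions, and one then has to match the surviving signature to the precise parallelogram ordering appearing in the definitions of $\beta$ and $\gamma$. A secondary difficulty is the interval-equality upgrade required for $\gamma$, since it genuinely uses $\Line{xy}=\Line{uv}$ and not only the four extracted betweennesses; I would expect this step to be where the ``pseudometric betweenness'' framework of~\cite{aboulker_chen_2016} streamlines what is otherwise a tedious verification.
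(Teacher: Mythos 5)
First, a contextual remark: the paper does not prove Theorem~\ref{thm:eqlines} at all --- it imports it as a rephrasing of Lemma~6.6 of~\cite{aboulker_chen_2016} --- so your attempt is being measured against that external argument, whose overall shape (extract the four collinearity statements from $\Line{uv}=\Line{xy}$, classify the betweenness signatures, send linear ones to $\alpha$ and cyclic ones to a parallelogram) your plan does follow. The genuine gap is in your endgame split ``if $d(u,v)=d(x,y)=1$ then $\beta$, otherwise promote to $\gamma$.'' This does not typecheck for the \emph{consecutive} parallelograms $(u,v,x,y)$ and $(u,v,y,x)$: the $\gamma$-relation is defined only for the \emph{crossing} parallelogram $(u,x,v,y)$, whose signature has $x\in I(u,v)$ and $v\in I(x,y)$, whereas $(u,v,x,y)$ has $v\in I(u,x)$ and $x\in I(v,y)$; one cannot be converted into the other, so the case ``consecutive parallelogram with $d(u,v)=d(x,y)\geq 2$'' is left with no target relation. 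That case must instead be shown \emph{impossible}, by an argument your plan does not contain: writing $p=d(u,v)=d(x,y)$ and $q=d(v,x)=d(y,u)\geq 1$, pick $m$ strictly between $u$ and $v$ with $d(u,m)=t$, $0<t<p$ (such an $m$ exists in a graph precisely because $p\geq 2$); the parallelogram identities squeeze $d(m,x)=p+q-t$ and $d(m,y)=q+t$, and then each of the three collinearity options for $m$ with $x,y$ forces $q=0$, $t=0$ or $t=p$, so $m\in\Line{uv}\setminus\Line{xy}$, contradicting line equality. Note that this step is graph-specific (it needs the interval $I(u,v)$ to contain an interior point), which is exactly where a purely signature-based argument cannot close the case.

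A secondary, repairable imprecision: your criterion for the aligned regime (``both vertices of one pair lying in the interval of the other'') does not by itself give a common shortest path --- the crossing parallelogram satisfies $x,y\in I(u,v)$ and $u,v\in I(x,y)$ simultaneously (an induced $C_4$ is the standard example) and is not $\alpha$-related. You need the full four-entry signature to certify a linear order $a_1,a_2,a_3,a_4$ with $d(a_1,a_4)=d(a_1,a_2)+d(a_2,a_3)+d(a_3,a_4)$, after which concatenating shortest subpaths yields the common shortest path. Your $\gamma$-upgrade (deriving $\Line{uv}=I(u,v)=I(x,y)=\Line{xy}$ from the crossing parallelogram together with line equality) is correctly identified as the step that uses the hypothesis beyond the extracted betweennesses, and it does go through.
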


We will also need this easy-to-prove property of parallelograms.

\begin{property}[{Rephrasing of~\cite[Lemma 6.9]{aboulker_chen_2016}}]
\label{prop:parallelogram}
If $(a,b,c,d)$ is a parallelogram in a graph $G$, then $d_G(a,b)
= d_G(c,d)$, $d_G(a,d) = d_G(b,c)$, and $d_G(a,c) =
d_G(b,d)$.
\end{property}

\subsection{Sets of lines $\LL_1$ and $\LL_2$}

Since Property~\eqref{eq:prop} is focused on lines generated by
pairs at distance at most 2, we naturally specify some subfamilies
of lines. Namely, for a graph $G$ in $\HH$, we define the set of
lines induced by vertices at distance exactly 1,
\begin{equation*}
  \LL_1(G) := \{\Line{uv}: uv \text{ is an edge} \}.
\end{equation*}
It turns out that the graph is always clear from the context in
this paper. Thus we shall abusively write $\LL_1$ instead of
$\LL_1(G)$.

The other family of lines that we shall consider is the set of
lines generated by {\em some} pairs at distance exactly 2. Note
that Property~\eqref{eq:prop} refers to {\em all} lines generated
by pairs at distance at most 2. And we actually need all of them
to prove that a minimal counter-example has no $C_4$-module. But
in the final counting (for those potential counter-examples), we
rely on a yet smaller subset of them: the lines generated by {\em
  good pairs}. A {\em good pair} is a pair of vertices $u$ and $v$
at distance exactly 2 and such that they have some common
neighbour $z$ satisfying $\Line{uz} = \Line{zv}$ (observe that
$\Line{uz}$ is in $\LL_1$). For any graph $G$ in $\HH$ we define
the set
\begin{equation*}
  \LL_2(G) := \{\Line{uv}: uv \text{ is a good pair} \}.
\end{equation*}
And we shall write $\LL_2$ since the context is always clear. Let
us repeat, for there is no better way to insist, that in general
$\LL_2$ is not the set of all lines generated by pairs of vertices
at distance exactly 2. It is a subset of it.

\subsection{Useful lemmas}

In an attempt to make subsequent proofs lighter, we gather here a
few results. First we study the structure of good pairs. Some of
these results are very general and could be used in larger
frameworks than \{house, hole\}-free graphs. Then, we give some
structural results about \{house, hole\}-free graphs. Their proof
is neither hard nor technical and a simple drawing makes things
pretty obvious. Still we wrote it all down. The reader is engaged
to skip the proofs if the result is clear enough. 

\subsubsection{On good pairs}

We start with an easy and general observation deriving from
triangular inequality and the definition of collinearity.

\begin{observation}
  In any connected graph $G$ and for any three vertices $u,v$ and
  $z$, if $z$ is not in $\Line{uv}$ then distances $d_G(z,u)$
  and $d_G(z,v)$ differ by at most $d_G(u,v) - 1$.
\end{observation}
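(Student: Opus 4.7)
The plan is to derive the bound directly from the triangle inequality, using the contrapositive and the very definition of collinearity.

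First I would recall that in any metric space, the triangle inequality yields $|d_G(z,u) - d_G(z,v)| \leq d_G(u,v)$ for any three vertices. So the only thing to rule out is equality: I need to show that if $|d_G(z,u) - d_G(z,v)| = d_G(u,v)$, then $z$ must lie on $\Line{uv}$.

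To that end, I would split into two symmetric cases. Suppose $d_G(z,u) - d_G(z,v) = d_G(u,v)$. Rewriting this as $d_G(z,u) = d_G(z,v) + d_G(v,u)$ says precisely that $v \in I(z,u)$, so $z,u,v$ are collinear and thus $z \in \Line{uv}$. The case $d_G(z,v) - d_G(z,u) = d_G(u,v)$ is identical with the roles of $u$ and $v$ swapped, giving $u \in I(z,v)$ and again $z \in \Line{uv}$.

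Taking the contrapositive: if $z \notin \Line{uv}$, then neither equality can hold, so $|d_G(z,u) - d_G(z,v)| < d_G(u,v)$. Since all distances are integers, this is equivalent to $|d_G(z,u) - d_G(z,v)| \leq d_G(u,v) - 1$, which is the required bound. I do not expect any genuine obstacle here; the only subtlety is to remember that $z \in \Line{uv}$ allows $z$ to play any of the three roles (between $u$ and $v$, or with $u$ between $z$ and $v$, or with $v$ between $z$ and $u$), and it is the latter two roles that correspond exactly to the equality case of the triangle inequality.
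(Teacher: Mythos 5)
Your proof is correct and is exactly the argument the paper has in mind: the observation is stated without proof, attributed precisely to the triangle inequality plus the definition of collinearity, and the equality case of the triangle inequality corresponding to $u$ or $v$ lying between $z$ and the other vertex (hence $z\in\Line{uv}$) together with integrality of graph distances is all that is needed. Your write-up fills in these same steps, so there is nothing to compare beyond noting that you have made the paper's implicit reasoning explicit.
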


Note that if $uv$ is an edge, it means that $z$ is equidistant to
$u$ and $v$. The following lemma asserts it is also the case when
$uv$ is a good pair.

\begin{lemma}
  \label{lem:distgood}
  In any connected graph $G$, if $uv$ is a good pair and a vertex
  $z$ is not in line $\Line{uv}$, then $d_G(z,u) = d_G(z,v)$.
\end{lemma}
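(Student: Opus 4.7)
The plan is to exploit the common neighbour $w$ provided by the definition of a good pair and translate the hypothesis $\overline{uw}=\overline{wv}$ into a statement about distances from $z$. First I would rule out the trivial cases $z\in\{u,v,w\}$: each of these vertices lies in $I(u,v)\subseteq\overline{uv}$ (for $w$, because $d_G(u,w)=d_G(w,v)=1$ and $d_G(u,v)=2$), which contradicts $z\notin\overline{uv}$. So from now on $z$ may be assumed distinct from $u,v,w$.

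Next I would establish the following characterisation for edges: for any edge $xy$ and any vertex $z\notin\{x,y\}$, by triangle inequality $|d_G(z,x)-d_G(z,y)|\in\{0,1\}$, and unpacking the definition of collinearity shows that $z\in\overline{xy}$ if and only if $|d_G(z,x)-d_G(z,y)|=1$, equivalently $z\notin\overline{xy}$ if and only if $d_G(z,x)=d_G(z,y)$. Applying this to the edges $uw$ and $wv$, the equality $\overline{uw}=\overline{wv}$ means that $z$ lies on both lines or on neither, so either $d_G(z,u)=d_G(z,w)$ and $d_G(z,w)=d_G(z,v)$ simultaneously, or $d_G(z,u)$ and $d_G(z,v)$ both differ from $d_G(z,w)$ by exactly one.

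In the first case the conclusion $d_G(z,u)=d_G(z,v)$ is immediate. In the second case, writing $d_G(z,u)=d_G(z,w)+\varepsilon_1$ and $d_G(z,v)=d_G(z,w)+\varepsilon_2$ with $\varepsilon_1,\varepsilon_2\in\{-1,+1\}$, the desired equality holds unless $\varepsilon_1\neq\varepsilon_2$, in which case $|d_G(z,u)-d_G(z,v)|=2$. Here the preceding Observation, applied to the pair $u,v$ at distance $2$ and to $z\notin\overline{uv}$, forces $|d_G(z,u)-d_G(z,v)|\leq 1$, a contradiction. Thus $\varepsilon_1=\varepsilon_2$ and $d_G(z,u)=d_G(z,v)$.

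The only real content is the characterisation of $\overline{xy}$ for an edge $xy$ in terms of distances, and the observation giving the $1$-Lipschitz constraint that eliminates the mismatched sign case; neither step is a genuine obstacle, so the lemma should follow by a short case distinction once the edge characterisation is recorded.
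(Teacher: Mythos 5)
Your proof is correct and follows essentially the same route as the paper's: both hinge on the common neighbour $w$ (called $c$ in the paper), the fact that for an edge $xy$ a vertex lies on $\Line{xy}$ exactly when its distances to $x$ and $y$ differ by one, and the Observation bounding $|d_G(z,u)-d_G(z,v)|$ by $1$. The paper merely organizes the same facts as a proof by contradiction (assuming the distances differ by exactly one and showing $z$ lands in the symmetric difference of $\Line{cu}$ and $\Line{cv}$), whereas you run the case analysis forward; the mathematical content is identical.
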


\begin{proof}
  Since $z$ is not in $\Line{uv}$ and $d_G(u,v) = 2$, distances
  from $z$ to $u$ and $v$ differ by at most~1. Suppose for a
  contradiction that they differ by exactly~1. Without loss of
  generality, we may assume that $d_G(z,u) = k$ and $d_G(z,v) =
  k+1$. Since $uv$ is a good pair, there is a vertex $c$ between
  $u$ and $v$ such that $\Line{cu} = \Line{cv}$. Let us observe
  the possible distances from $z$ to $c$. Since $c$ is a neighbour
  of $u$, $d_G(z,c)$ is in $\{k-1,k,k+1\}$. Similarly, since $c$
  is a neighbour of $v$, $d_G(z,c)$ is in $\{k,k+1,k+2\}$. So this
  distance can be either $k$ or $k+1$. Now observe that in both
  cases, this means that $z$ is in the symmetric difference of
  lines $\Line{cu}$ and $\Line{cv}$ which is a contradiction.
\end{proof}

In our definition of a good pair $uv$, we only ask for one middle
vertex $c$ to satisfy $\Line{cu} = \Line{cv}$. It turns out that
it has to be true for any common neighbour of $u$ and $v$.

\begin{lemma}
  \label{lem:eqgood}
  In any connected graph $G$, if $uv$ is a good pair then for
  every vertex $c$ between $u$ and $v$, $\Line{cu} = \Line{cv}$.
\end{lemma}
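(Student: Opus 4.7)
The plan is to fix an arbitrary vertex $c$ between $u$ and $v$ (necessarily a common neighbour of $u,v$ since $d_G(u,v)=2$) and to establish $\Line{cu}=\Line{cv}$ by showing, for every vertex $z \in V(G)$, that $z \in \Line{cu}$ iff $z \in \Line{cv}$. The key reformulation I would use is that, because $uc$ is an edge, the triangle inequality gives $|d_G(z,u)-d_G(z,c)| \le 1$, and the collinearity of $z,u,c$ is equivalent to $d_G(z,u) \neq d_G(z,c)$; the same statement holds for $\Line{cv}$ with $v$ in place of $u$. So the task reduces to
\[
  d_G(z,u)\neq d_G(z,c) \iff d_G(z,v)\neq d_G(z,c).
\]

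I would then split on whether $z$ lies on $\Line{uv}$. If $z \notin \Line{uv}$, Lemma~\ref{lem:distgood} yields $d_G(z,u)=d_G(z,v)$ directly and the equivalence is immediate; this is the only place the good pair hypothesis actually enters. If instead $z \in \Line{uv}$, the condition $d_G(u,v)=2$ leaves only five possibilities: $z=u$, $z=v$, $z$ is another common neighbour of $u$ and $v$, $u$ is between $z$ and $v$, or $v$ is between $z$ and $u$. Each is settled by a short direct calculation: the edges $uc$ and $vc$ force $d_G(z,c)$ to lie in $\{d_G(z,u)-1,d_G(z,u),d_G(z,u)+1\} \cap \{d_G(z,v)-1,d_G(z,v),d_G(z,v)+1\}$, which typically pins $d_G(z,c)$ down to at most one value, from which the equivalence is read off by inspection.

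I do not see a real obstacle here: the conceptual work is already packed into Lemma~\ref{lem:distgood}, and the remainder is purely organizational case-checking. The main point to stress in the write-up will be the clean reformulation of membership in $\Line{cu}$ as the inequality $d_G(z,u)\neq d_G(z,c)$, which makes the whole argument collapse to a comparison of three distances from $z$.
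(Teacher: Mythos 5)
Your proposal is correct and rests on the same two ingredients as the paper's proof: the reformulation of membership in a line generated by an edge as an inequality of distances, and Lemma~\ref{lem:distgood} applied to vertices off $\Line{uv}$. The paper merely organizes this as a contradiction --- a vertex $z$ in the symmetric difference of $\Line{cu}$ and $\Line{cv}$ satisfies $d_G(z,c)=d_G(z,u)$ and $d_G(z,v)\neq d_G(z,c)$, hence has $d_G(z,u)$ and $d_G(z,v)$ differing by one, so $z\notin\Line{uv}$ yet is not equidistant from $u$ and $v$ --- which sidesteps your five-way case analysis for $z\in\Line{uv}$, but the substance is identical.
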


\begin{proof}
  Let $c$ be a middle vertex of $u$ and $v$, and assume that
  $\Line{cu}$ is not equal to $\Line{cv}$. Without loss of
  generality we may assume that there exists a vertex $z$ which is
  not in $\Line{cu}$ but is in $\Line{cv}$. Since $z$ is not in
  $\Line{cu}$ and $cu$ is an edge, we have $d_G(z,c) =
  d_G(z,u)$. Let us call this distance $k$.  Now since $v$ is a
  neighbour of $c$, the distance from $z$ to $v$ is in
  $\{k-1,k,k+1\}$. It cannot be equal to $k$ since $z$ is in
  $\Line{cv}$. Thus it is $k-1$ or $k+1$ and so $z$ is not in
  $\Line{uv}$. But then by Lemma~\ref{lem:distgood}, $z$ should be
  equidistant to $u$ and $v$ which is a contradiction.
\end{proof}

\subsubsection{Structural results about \{house, hole\}-free graphs}

We already mentioned that for any edge $uv$, a vertex which is
not in line $\Line{uv}$ must be equidistant to $u$ and $v$. Next
lemma gives more insight on the situation of a vertex not
belonging to a line generated by an edge.

\begin{lemma}
  \label{lem:1line}
  Let $G$ be a graph in $\HH$ and $uv$ be an edge of $G$. Then,
  for every vertex $z$ not in line $\Line{uv}$, there exists a
  common neighbour $w$ of $u$ and $v$, such that $w$ lies on a
  shortest path from $z$ to $u$ and a shortest path from $z$
  to~$v$.
\end{lemma}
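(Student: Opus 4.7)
The plan is induction on $k := d_G(z, u)$. Since $z \notin \Line{uv}$ and $d_G(u,v) = 1$, the observation preceding the lemma gives $d_G(z, v) = d_G(z, u) = k$. The base case $k = 1$ is immediate: $z$ itself is a common neighbour of $u$ and $v$.

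For $k \geq 2$, fix a shortest path $z = x_0, x_1, \ldots, x_k = u$; the triangle inequality pins $d_G(x_i, v) \in \{k - i, k - i + 1\}$ for every $i$. If some $i \in \{1, \ldots, k-1\}$ satisfies $d_G(x_i, v) = k - i$, then $x_i$ is equidistant from $u$ and $v$ at positive distance strictly less than $k$, so $x_i \notin \Line{uv}$; the induction hypothesis applied at $x_i$ produces a common neighbour $w$ of $u$ and $v$ with $d_G(x_i, w) = k - i - 1$, and combining with $d_G(z, x_i) = i$ yields $d_G(z, w) = k - 1$, which together with $w \sim u$, $w \sim v$ means $w$ lies on shortest paths from $z$ to both $u$ and $v$, as required. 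A symmetric argument handles the case where some internal vertex of a shortest path from $z$ to $v$ is equidistant.

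Otherwise, fixing also a shortest path $z = y_0, y_1, \ldots, y_k = v$, the bad case forces $d_G(x_i, v) = k - i + 1$ and $d_G(y_j, u) = k - j + 1$ for all $1 \leq i, j \leq k - 1$. Comparing these distance profiles to $u$, $v$, and $z$ shows that the $2k + 1$ vertices $z, x_1, \ldots, x_{k-1}, u, v, y_{k-1}, \ldots, y_1$ are pairwise distinct and hence form a closed cycle $C$ of length $2k+1$. The same distance constraints eliminate every candidate chord of $C$ except those of the form $x_i y_i$ for $1 \leq i \leq k-1$; let $S$ denote the subset of indices for which $x_i y_i$ actually is an edge.

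If $S = \emptyset$ then $C$ itself is an induced hole, contradicting $G \in \HH$. Otherwise, a gap analysis on $S$---considering the chordless sub-cycles bounded by consecutive chords and by $z$ or by the edge $uv$ at the two ends---forces $S = \{1, 2, \ldots, k-1\}$, since any gap of size at least $2$ would produce an induced sub-cycle of length at least $5$, a hole. But then the five vertices $z, x_1, x_2, y_2, y_1$ (with $x_2 = u$ and $y_2 = v$ in the case $k = 2$) induce the $5$-cycle $z, x_1, x_2, y_2, y_1, z$ together with the single chord $x_1 y_1$, i.e.\ a house, giving the final contradiction. The main technical obstacle is the chord analysis of $C$: carefully exploiting the bad-case distance profiles to kill the ``near-diagonal'' adjacencies $x_i y_{i \pm 1}$, and then iteratively squeezing $S$ down to $\{1, \ldots, k-1\}$ to extract the forbidden house.
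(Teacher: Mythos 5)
Your proof is correct and takes essentially the same route as the paper: both reduce to a cycle formed by two shortest paths from a common apex down to $u$ and $v$, whose only possible chords are the ``horizontal'' ones $x_iy_i$, and then extract an induced hole or house. The only difference is packaging --- you reach that configuration by induction on $d_G(z,u)$, whereas the paper picks $w$ minimal in $I(z,u)\cap I(z,v)$ and hangs the cycle from $w$ (the same idea in extremal rather than inductive form), and your chord and gap analysis spells out in detail what the paper delegates to a figure.
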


\begin{proof}
  Since $z$ is not in $\Line{uv}$ and $uv$ is an edge, $z$ is
  equidistant to $u$ and $v$. Let $W$ be the set of all vertices
  that lie on both a shortest path from $z$ to $u$ and a shortest
  path from $z$ to $v$. Formally,
  \begin{equation*}
    W = I(z,u) \cap I(z,v).
  \end{equation*}
  Observe that any vertex in $W$ has its distances to $u$ and $v$
  equal. Moreover, $W$ is not empty since $z$ is in $W$. Now let
  $w$ be a vertex in $W$ with minimum distance to $u$. Let $P_u$
  be a shortest path from $u$ to $w$. There exists some integer
  $k$ such that $P_u$ is $(u_0,u_1,\ldots,u_k)$ where $u_0$ is $u$
  and $u_k$ is $w$. Similarly, let $P_v$ be a shortest path from
  $v$ to $w$. For this same integer $k$ we may describe $P_v$ as
  $(v_0,v_1,\ldots,v_k)$ such that $v_0$ is $v$ and $v_k$ is $w$.
  
  \begin{figure}[ht]
    \center
    \begin{tikzpicture}
      \node[vertex] (u) at (-.5,0) {};
      \node[left] at (u) {$u$};
      \node[vertex] (v) at (1.5,0) {};
      \node[right] at (v) {$v$};
      \draw (u)--(v);
      \node[vertex] (z) at (.5,4) {};
      \node[above] at (z) {$z$};
      \node[vertex] (w) at (.5,2) {};
      \node[right] at (w) {$w$};
      \node[vertex] (v1) at (1,1) {};
      \node[right] at (v1) {$v_{k-1}$};
      \node[vertex] (u1) at (0,1) {};
      \node[left] at (u1) {$u_{k-1}$};
      \draw[spath] (z) -- (w);
      \draw (u1) -- (w) -- (v1);
      \draw[spath] (u) -- (u1);
      \draw[spath] (v) -- (v1);
    \end{tikzpicture}
    \caption{Proof of Lemma~\ref{lem:1line}}
    \label{fig:1line}
  \end{figure}
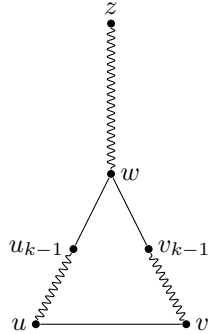

  Now since $P_u$ and $P_v$ are shortest paths, they have no
  internal chord. Moreover, by our choice of $w$, edges between
  $P_u$ and $P_v$ have to be on a same level (joining $u_i$ with
  $v_i$).

  If $k$ is 2 or more, we have either an induced hole or a house,
  which is a contradiction (see illustration on
  Figure~\ref{fig:1line}). So $k$ equals 1 and $w$ satisfies the
  conclusion of our statement.
\end{proof}

Now we state two lemmas the proof of which is neither interesting
nor enlightening. But they help for later proofs to be more
reader-friendly.

\begin{lemma}[Roof lemma]
  \label{lem:roof}
  If a graph $G$ is in $\HH$ and there is a cycle
  $C=(x_1,x_2,\ldots,x_k)$ of order at least 5 such that:
  \begin{itemize}
  \item $x_2x_k$ is an edge, and
  \item $x_1$ and $x_2$ have no other neighbours in $C$,
  \end{itemize}
  then $x_3x_k$ is an edge in $G$.
\end{lemma}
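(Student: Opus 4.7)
The plan is to proceed by strong induction on the length $k \geq 5$ of the cycle $C$.

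For the base case $k = 5$, consider the induced subgraph on $V(C) = \{x_1, \ldots, x_5\}$. The hypothesis that $x_1$ has no other neighbours in $C$ rules out $x_1 x_3$ and $x_1 x_4$ as edges; similarly the analogous hypothesis on $x_2$ rules out $x_2 x_4$. Together with the given chord $x_2 x_5$, the only remaining potential chord is $x_3 x_5$. If this were not an edge, the induced subgraph on those five vertices would be a $5$-cycle with exactly one chord, i.e., an induced house, contradicting $G \in \HH$. So $x_3 x_5$ must be an edge.

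For the inductive step $k \geq 6$, assume the lemma for all smaller lengths $\geq 5$, and suppose for contradiction that $x_3 x_k$ is not an edge. Consider the cycle $C_2 = (x_2, x_3, \ldots, x_k, x_2)$ of length $k - 1 \geq 5$, which closes through the given chord $x_2 x_k$. If $C_2$ is induced, it is an induced hole in $G$, contradicting $G \in \HH$. Otherwise $C_2$ admits a chord; since $x_2$'s only neighbours in $C$ are $x_1, x_3, x_k$, this chord must be of the form $x_i x_j$ with $3 \leq i < j \leq k$ and $j \geq i + 2$, and our contradiction hypothesis rules out $(i, j) = (3, k)$.

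Use this chord to shortcut $C$: set $C' = (x_1, x_2, x_3, \ldots, x_i, x_j, x_{j+1}, \ldots, x_k, x_1)$. Its length $k - j + i + 1$ lies in the range $[5, k-1]$, precisely because $(i, j) \neq (3, k)$ while $3 \leq i < j \leq k$ and $j \geq i + 2$. After relabeling, the hypotheses of the Roof lemma transfer to $C'$: the chord $x_2 x_k$ still plays the role of the distinguished edge, and the neighbourhood restrictions on $x_1$ and $x_2$ in $C$ descend to $C'$ since $V(C') \subseteq V(C)$. The induction hypothesis applied to $C'$ then yields the edge $x_3 x_k$ in $G$, contradicting our assumption. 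The main obstacle, if any, is the length bookkeeping for $C'$ and verifying that $(i, j) = (3, k)$ is the unique case where $|C'|$ would drop to $4$; this is exactly the case we exclude.
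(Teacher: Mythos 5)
Your proof is correct, and it follows the same overall strategy as the paper: induction on the cycle length, with the base case $k=5$ forced by excluding an induced house. The inductive step, however, is organized differently. The paper cases on whether $x_k$ is incident to a further chord of $C$: if so it recurses on the shorter cycle $(x_1,x_2,\ldots,x_i,x_k)$, and if not it disposes of the remaining situation with a rather terse appeal to ``a longest chord'' producing a hole or a house, without spelling out the details. You instead pass to the $(k-1)$-cycle $C_2=(x_2,x_3,\ldots,x_k)$ closed through the chord $x_2x_k$ and observe the clean dichotomy: either $C_2$ is an induced hole (impossible in $\HH$), or it has some chord $x_ix_j$ with $3\le i<j\le k$, and \emph{any} such chord (not just one at $x_k$) shortcuts $C$ to a cycle $C'$ of length $i+k-j+1\in[5,k-1]$ that still satisfies all hypotheses with the same distinguished vertices $x_1,x_2,x_3,x_k$. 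The one delicate point --- that the length of $C'$ could drop to $4$ only when $(i,j)=(3,k)$, which is exactly the negation of the desired conclusion --- is correctly identified and checked. The net effect is a uniform treatment that subsumes both of the paper's cases and fills in the part the paper leaves as a sketch; your version is the more complete write-up of essentially the same induction.
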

\begin{proof}
This is true when the cycle has length 5 (otherwise we have an
induced house). When the cycle is longer, assume that $x_3x_k$ is
not an edge. If $x_k$ has another chord in this cycle (to $x_i$),
we may apply the lemma on this shorter cycle
($x_1,x_2,\ldots,x_i,x_k$). Otherwise, $x_k$ has no other
neighbour. By considering a longest chord (in the sense that it
shortcuts a long part of the cycle), we either get an induced hole
or an induced house.
\end{proof}

Next lemma relies on Lemma~\ref{lem:1line}. 

\begin{lemma}
  \label{lem:C4}
 Let $G$ be a graph in $\HH$ and $C$ an induced cycle of length 4
 in $G$. If a vertex is at distance $k$ from two consecutive
 vertices of $C$, then it is at distance at most $k$ from one of
 the remaining vertices in $C$.
\end{lemma}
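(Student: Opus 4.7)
The plan is to argue by contradiction. Write the induced 4-cycle as $(a,b,c,d)$ and suppose that $z$ is a vertex with $d_G(z,a) = d_G(z,b) = k$ while, for contradiction, $d_G(z,c) \geq k+1$ and $d_G(z,d) \geq k+1$. Since $c$ is a neighbor of $b$ and $d$ is a neighbor of $a$, the triangle inequality forces both of these inequalities to be equalities, so $d_G(z,c) = d_G(z,d) = k+1$; in particular $k \geq 1$, since otherwise $a = z = b$.

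Next, I would exploit that $z$ is equidistant at distance $k \geq 1$ from the two endpoints of the edge $ab$, and therefore does not lie on $\Line{ab}$. Applying Lemma~\ref{lem:1line} yields a common neighbor $w$ of $a$ and $b$ with $d_G(z,w) = k-1$. Crucially $w \notin \{a,b,c,d\}$: in the induced $C_4$ neither $c$ nor $d$ is a common neighbor of $a$ and $b$, and obviously $w \neq a,b$.

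The proof then closes with a short case analysis on how $w$ sees $c$ and $d$. If $w$ is adjacent to $c$ (respectively to $d$), then $d_G(z,c) \leq d_G(z,w) + 1 = k$ (respectively $d_G(z,d) \leq k$), contradicting the assumed $k+1$. Otherwise, $w$ is adjacent to neither $c$ nor $d$, and the induced subgraph on $\{w,a,b,c,d\}$ consists of exactly the edges $wa, wb, ab, bc, cd, da$. This is the induced 5-cycle $w\,a\,d\,c\,b\,w$ together with the single chord $ab$, i.e. an induced house, contradicting $G \in \HH$. The only mild subtlety is recognising the house in this last subcase; everything else is forced by Lemma~\ref{lem:1line} and the triangle inequality.
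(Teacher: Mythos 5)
Your proof is correct and follows essentially the same route as the paper's: both apply Lemma~\ref{lem:1line} to get a common neighbour $w$ of the two close vertices, rule out adjacency of $w$ to the far vertices by the distance assumption, and conclude with an induced house. Your version merely spells out the details (the triangle-inequality forcing $d_G(z,c)=d_G(z,d)=k+1$, the fact that $k\geq 1$, and the explicit identification of the house) that the paper leaves implicit.
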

\begin{proof}
  Let $x_0x_1x_2x_3$ be the induced 4-cycle and assume for a
  contradiction that there is a vertex $z$ which is at distance
  $k$ from $x_0$ and $x_1$ but at distance $k+1$ from both $x_2$
  and $x_3$. By Lemma~\ref{lem:1line}, there is a vertex
  $z'$ which is in the common neighbourhood of $x_0$ and
  $x_1$. This vertex cannot be adjacent to $x_2$ or $x_3$ for
  distance reasons. So we have an induced house.
\end{proof}


\section{Structure in $\LL_1$: complete bipartite subgraphs}
\label{sec:l1}

In this section, we study the class of edges that generate the
same line. Mainly we prove that such a set of edges induce a
complete bipartite subgraph.

\begin{proposition}
  \label{prop:eqlines1}
  Let $G$ be a graph in $\HH$ and let $uv$ and $xy$ be two edges
  of $G$ such that $\Line{uv} = \Line{xy}$, this line not being
  universal. Then, either $\{u,v,x,y\}$ induces a $P_3$ (pairs
  $uv$ and $xy$ share one vertex), or $\{u,v,x,y\}$ induces a
  $C_4$ in $G$ and each edge of this $C_4$ generates the same
  line.
\end{proposition}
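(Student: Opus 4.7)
The plan is to invoke Theorem~\ref{thm:eqlines}, which makes $uv$ and $xy$ either $\alpha$-, $\beta$-, or $\gamma$-related, and then to treat the three cases in turn. The $\gamma$-relation is ruled out immediately: it requires $(u,x,v,y)$ to be a parallelogram of four distinct vertices with $x\in I(u,v)$, but $d_G(u,v)=1$ gives $I(u,v)=\{u,v\}$, forcing $x\in\{u,v\}$.

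In the $\alpha$-case all four vertices lie on a common shortest path $P$, and since $uv$ and $xy$ are edges they appear as consecutive pairs on $P$. If the two pairs share a vertex, $\{u,v,x,y\}$ has only three elements and induces the required $P_3$. Otherwise all four are distinct and appear as the two end edges of a subpath of $P$ of length at least~$3$, inducing either $P_4$ (when the middle distance is $1$) or $2K_2$ (when it exceeds $1$). In both subcases I would show $\Line{uv}$ is universal, contradicting the hypothesis: for any $z\notin\Line{uv}=\Line{xy}$, the edges $uv$ and $xy$ force $d_G(z,u)=d_G(z,v)$ and $d_G(z,x)=d_G(z,y)$; Lemma~\ref{lem:1line} supplies common neighbours $w$ of $u,v$ and $w'$ of $x,y$ on shortest paths from $z$, and combining these with the induced portion of $P$ produces an induced house, contradicting $G\in\HH$.

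For the $\beta$-case I may assume $(u,v,x,y)$ is a parallelogram with $d_G(u,v)=d_G(x,y)=1$. Property~\ref{prop:parallelogram} then yields $d_G(v,x)=d_G(u,y)=k$ and $d_G(u,x)=d_G(v,y)=k+1$ for some $k\ge 1$. When $k=1$, the edges $vx$ and $uy$ exist and $\{u,v,x,y\}$ induces the claimed $C_4$. When $k\ge 2$, I fix shortest paths $v=v_0,\dots,v_k=x$ and $u=u_0,\dots,u_k=y$; the identities $d_G(u,v_i)=i+1$ and $d_G(v,u_j)=j+1$ imply that any chord of the resulting cycle of length $2k+2$ must be of the form $v_i u_i$. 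Choosing the smallest such chord and iterating produces either an induced hole of length at least $6$ (forbidden in $\HH$) or the full ladder on $2(k+1)$ vertices; in the ladder, every ladder-vertex is on $\Line{uv}$ by direct distance computation, and any outside vertex $z$ with $d_G(z,u)=d_G(z,v)$ is ruled out by repeated applications of Lemma~\ref{lem:1line} producing an induced house, so the line is forced to be universal.

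To close the $C_4$ conclusion, I would verify $\Line{vx}=\Line{yu}=\Line{uv}$. If some $z$ witnesses $\Line{uv}\neq\Line{vx}$, then the conditions $|d_G(z,u)-d_G(z,v)|\in\{0,1\}$ and $|d_G(z,x)-d_G(z,y)|\in\{0,1\}$ (equal on both sides, since $\Line{uv}=\Line{xy}$), together with $d_G(z,v)=d_G(z,x)$ from the edge $vx$, force the quadruple $(d_G(z,u),d_G(z,v),d_G(z,x),d_G(z,y))$ into one of a few patterns such as $(p-1,p,p,p-1)$ or $(p,p,p+1,p+1)$. In each pattern, Lemma~\ref{lem:1line} applied at the edge where $z$ is equidistant supplies a common neighbour $w$ whose distances to the remaining two $C_4$-vertices are exactly $2$, so that $\{u,v,x,y,w\}$ induces a $5$-cycle together with exactly one chord, that is, a house. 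I expect the main obstacle to be the $\beta$-case with $k\ge 2$, since the ladder itself lies in $\HH$ and ruling it out demands deducing universality of the line rather than pointing to a forbidden induced subgraph.
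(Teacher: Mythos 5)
Your plan follows the paper's proof essentially step for step: the same appeal to Theorem~\ref{thm:eqlines}, the same immediate dismissal of the $\gamma$-relation via $I(u,v)=\{u,v\}$, the same strategy of planting an outside vertex with Lemma~\ref{lem:1line} and forcing a forbidden induced subgraph in both the $\alpha$- and $\beta$-cases, and the same distance analysis on the resulting $C_4$ (which the paper packages as Lemma~\ref{lem:C4} rather than inlining the case patterns as you do). The one substantive detail you leave open --- eliminating the full ladder when $k\ge 2$ in the $\beta$-case --- is exactly where the paper brings in its Roof Lemma (Lemma~\ref{lem:roof}) after forcing $z$ to see $p_1$ or $q_1$, and your outline is consistent with that; note only that in the $\alpha$-case the forbidden subgraph may be an (even) hole rather than a house when the two edges are far apart, and that in the final $C_4$ step you must apply Lemma~\ref{lem:1line} at the equidistant edge \emph{closer} to $z$, since the common neighbour supplied at the farther edge need not be at distance $2$ from the opposite pair.
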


\begin{proof}
  Let $uv$ and $xy$ be two edges of $G$ generating the same line
  $\ell$. By Theorem~\ref{thm:eqlines} they are $\alpha$-related,
  $\beta$-related or $\gamma$-related.

  \paragraph{No possible $\gamma$-relation}

  Observe that if $uv$ and $xy$ were $\gamma$-related, then these
  would be four distinct vertices and $x$ would be in the interval
  $I(u,v)$ (see definition of $\gamma$-relation in
  Section~\ref{sub:aboulker}). This would contradict the fact
  that $uv$ is an edge of $G$.

  \paragraph{If $\alpha$-related then they induce a $P_3$}

  Assume that $uv$ and $xy$ are $\alpha$-related. If a shortest
  path in $G$ goes through all those vertices, it must visit one
  edge and then the other edge. Without loss of generality we may
  assume that there is a shortest path from $u$ to $y$ which
  starts with edge $uv$ and ends with edge $xy$. Let $k$ denotes
  the distance between $v$ and $x$. We want to prove that $k$ is
  0. As the line is not universal, and by Lemma~\ref{lem:1line},
  there is a vertex $z$ not in $\Line{uv}$ in the common
  neighbourhood of $u$ and $v$. This vertex is not in $\Line{uv}$
  so by assumption, it is not in $\Line{xy}$. Applying again
  Lemma~\ref{lem:1line}, there is a vertex $w$ in the common
  neighbourhood of $x$ and $y$ such that $w$ is in $I(z,x)$ and in
  $I(z,y)$. Let $l$ denote the distance from $z$ to $w$ (see
  Figure~\ref{fig:alpha1}). We shall prove that $k=l$. The
  distance from $u$ to $y$ is $k+2$ so the path going through $z$
  and $w$ (length $l+2$) cannot be strictly shorter. Thus, $k \leq
  l$. Similarly, since $w$ is between $z$ and $x$, the distance
  from $z$ to $x$ is $l+1$ so the path from $z$ to $x$ going
  through $v$ cannot be strictly shorter. Thus $l \leq k$. In the
  end, $k=l$.

  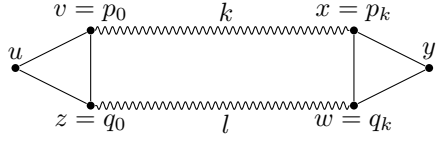
\begin{figure}[ht]
  \center
  \begin{tikzpicture}
    \node[vertex] (u) at (-3.5,0.5) {};
    \node[above] at (u) {$u$};
    \node[vertex] (v) at (-2.5,1) {};
    \node[above] at (v) {$v=p_0$};
    \draw (u)--(v);
    \node[vertex] (y) at (2,0.5) {};
    \node[above] at (y) {$y$};
    \node[vertex] (x) at (1,1) {};
    \node[above] at (x) {$x=p_k$};
    \draw (x)--(y);
    \draw[spath] (v)--(x);
    \node[vertex] (z) at (-2.5,0) {};
    \node[below] at (z) {$z=q_0$};
    \draw (u)--(z)--(v);
    \node[vertex] (w) at (1,0) {};
    \node[below] at (w) {$w=q_k$};
    \draw (x)--(w)--(y);
    \draw[spath] (w)--(z);
    \node (l) at (-0.7,0){};
    \node[below] at (l) {$l$};
    \node (k) at (-0.7,1){};
    \node[above] at (k) {$k$};
  \end{tikzpicture}
  \caption{When $uv$ and $xy$ are $\alpha$-related}\label{fig:alpha1}
\end{figure}

  Let us name vertices on a $ux$-shortest path
  $v=p_0,p_1,\ldots,p_k=x$ and similarly vertices on a
  $zw$-shortest path $z=q_0,q_1,\ldots,q_k=w$ (see
  Figure~\ref{fig:alpha1}). For distance reasons, these two paths
  must be vertex disjoint (otherwise, $z$ would be too close to
  $x$ or $v$ would be too close to $y$). Observe that vertex $y$
  has no other neighbour than $x$ and $w$ among all those
  vertices. Moreover, the only possible chords in the cycle must
  be again between vertices of the same level (otherwise $z$ would
  be in $\Line{xy}$ of $w$ would be in $\Line{uv}$). In the end,
  if $k$ is 1 or more, we have an induced house or an induced hole
  (actually an even hole). So $k = 0$ and edges $uv$ and $xy$ are
  adjacent. We have an induced $P_3$.

  \paragraph{If $\beta$-related then a $C_4$ of equivalent lines}

  Now for the last case, assume that $uv$ and $xy$ are
  $\beta$-related. Without loss of generality, we may assume that
  $(u,v,x,y)$ is a parallelogram.  Let $k$ be the distance from
  $u$ to $y$. By Property~\ref{prop:parallelogram},
  $d_G(u,y)=d_G(v,x)=k$ and $d_G(u,x) = d_G(v,y) = k+1$.
  
  We claim that $k=1$. Assume for a contradiction that $k \geq
  2$. Let $P_{uy}$ be a shortest $uy$-path We note its vertices
  $u=q_0,q_1,\ldots,q_k=y$. Similarly, let $P_{vx}$ be a shortest
  $vx$-path and let its vertices be $v=p_0,p_1,\ldots,p_k=x$. By
  definition of a parallelogram, we easily get that $P_{uy}$ and
  $P_{vx}$ must be vertex disjoint. The cycle made with $uv$,
  $P_{vx}$, $xy$ and $P_{uy}$, may have chords but only on a same
  level (of the form $p_iq_i$) for distance reasons. Actually, in
  order to prevent holes, all these chords must be present. Now,
  since $\Line{uv}$ is not universal and by Lemma~\ref{lem:1line},
  there exists a vertex $z$ not in $\Line{uv}$ such that $z$ is in
  the common neighborhood of $u$ and $v$. Now since $z$ is not in
  $\Line{xy}$, it is equidistant to $x$ and $y$. Moreover, by
  Lemma~\ref{lem:1line}, there is a vertex $w$ in the common
  neighbourhood of $x$ and $y$ which lies on a shortest $zx$-path
  and a shortest $zy$-path. All those vertices from $z$ to $w$ are
  equidistant and thus are not in $\Line{xy}$. This ensures that
  they are all new vertices. Let $l$ denote the distance from $z$
  to $y$ (and to $x$).

 To prevent a house on $\{z,u,v,p_1,q_1\}$, $z$ must have a
 neighbour in $\{p_1,q_1\}$. By symmetry, we may take $p_1$
 without loss of generality. Now this yields a path of length $k$
 from $z$ to $x$. So $l \leq k$. Moreover, through $z$ one may
 find a path from $u$ to $x$ of length $l+1$. Since $d_G(u,x) =
 k+1$ we get that $k \leq l$. Thus we have equality $k = l$.  This
 ensures that $v$ has no other neighbours in the picture (the path
 to $y$ through $z$ and $w$ is a shortest path). For similar
 reasons, $u$ also has no other neighbours among all involved
 vertices. Since $k$ is at least 2, by the roof
 lemma~\ref{lem:roof} applied on the cycle made of $uv$, $vz$,
 $P_{zw}$, $wy$ and $P_{uy}$, we must have the edge $zq_1$. But
 now, to prevent a house on $\{z,q_1,p_1,q_2,p_2\}$ we need an
 edge between $z$ and $p_2$ or $q_2$ contradicting the distance
 between $z$ and $x$ or $y$.  This proves that $k = 1$.
  
  It remains to prove that $\Line{vx} = \Line{uy} =
  \Line{uv}$. But now we have an induced $C_4$ and we may apply
  extensively Lemma~\ref{lem:C4}. If a vertex is not in
  $\Line{uv}$ then it is at some distance $k$ from $u$ and $v$ and
  at some distance $l$ from $x$ and $y$. By Lemma~\ref{lem:C4}, $k
  = l$. Therefore, this vertex is not in $\Line{vx}$ nor
  $\Line{uy}$. Reciprocally, if a vertex is not in $\Line{vx}$ it
  is at some distance $k$ from both $v$ and $x$. If it is at
  distance $k$ from $u$ or $y$, then we are in the same case as
  previously, and it is at distance $k$ from everyone, and thus is
  not in $\Line{uv}$. So we may assume it is at distance $k-1$
  from $u$. Then it is in $\Line{uv}=\Line{xy}$, so it has no
  other choice than being at distance $k-1$ from $v$ which is a
  contradiction with Lemma~\ref{lem:C4}. Hence, $\Line{vx} =
  \Line{uv}$ and similarly $\Line{uy}=\Line{uv}$.
\end{proof}

Proposition~\ref{prop:eqlines1} admits the following corollary
that we will use in the final proof of the theorem.

\begin{corollary}
  \label{coro:bipl1}
   Let $G$ be a graph in $\HH$, $\ell$ a non-universal line in
   $\LL_1$ and $F$ the set of edges generating $\ell$. Then the
   subgraph restricted to $F$ is a bipartite complete graph (and
   it is an induced subgraph).
\end{corollary}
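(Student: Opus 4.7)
The plan is to bootstrap Proposition~\ref{prop:eqlines1} into a global bipartition of the vertex set of $F$. Fix an arbitrary reference edge $uv \in F$, and define
\[
A := \{u\} \cup \{w : vw \in F\}, \qquad B := \{v\} \cup \{w : uw \in F\}.
\]
I would then check, in order, four things: (i) every endpoint of an edge of $F$ lies in $A\cup B$; (ii) $A\cap B=\emptyset$; (iii) for every $a\in A$ and $b\in B$ the pair $ab$ is an edge of $G$ lying in $F$; (iv) there are no edges of $G$ inside $A$ or inside $B$. Points (iii) and (iv) together give both the complete bipartite structure and the fact that it is induced.

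Points (i) and (ii) are easy consequences of Proposition~\ref{prop:eqlines1}: for any edge $xy\in F$, either $xy$ shares a vertex with $uv$ (so one of $x,y$ equals $u$ or $v$, placing the other directly in $A$ or $B$), or $\{u,v,x,y\}$ induces a $C_4$ in which every edge is in $F$, which means each of $x,y$ is adjacent via an edge of $F$ to exactly one of $u,v$. For (ii), if some $w\neq u,v$ belonged to both $A$ and $B$ then $uw,vw\in F$, but the pair $uv,uw$ would then induce a $P_3$ by Proposition~\ref{prop:eqlines1}, forcing $v$ and $w$ to be non-adjacent, a contradiction. The non-membership of $u$ in $B$ and of $v$ in $A$ follows the same way.

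The key step, and the only one requiring a short argument, is point (iii) in the generic case $a\neq u$, $b\neq v$. Here we have $va, ub\in F$, and these two edges cannot share a vertex (since $a\neq u,v$ and $b\neq u,v$ by (ii)). Applying Proposition~\ref{prop:eqlines1} to the pair $\{va,ub\}$ forces $\{v,a,u,b\}$ to induce a $C_4$ all of whose edges generate $\ell$. Using the $P_3$ case of Proposition~\ref{prop:eqlines1} on the pairs $\{uv,va\}$ and $\{uv,ub\}$ we already know that $ua$ and $vb$ are non-edges, so the $C_4$ on $\{u,v,a,b\}$ must close through the edge $ab$, and that edge belongs to $F$. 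This is where I expect a little care is needed: one must combine the $P_3$ rule (to eliminate the wrong diagonals of the $C_4$) with the $C_4$ rule (to insert $ab$ and to put it in $F$).

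Point (iv) is then immediate: two distinct vertices of $A$ both different from $u$ are joined to $v$ by edges of $F$, and the $P_3$ clause of Proposition~\ref{prop:eqlines1} applied to those two edges says they induce $P_3$, so the two vertices are non-adjacent; the case involving $u$ itself is already handled in the previous paragraph. The symmetric argument works for $B$. Combining (i)--(iv) shows that the subgraph of $G$ spanned by the endpoints of edges in $F$ is exactly the complete bipartite graph with parts $A$ and $B$, and is an induced subgraph of $G$, as required.
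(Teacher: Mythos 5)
Your proof is correct, and while it rests on exactly the same engine as the paper's --- the dichotomy of Proposition~\ref{prop:eqlines1}, with the shared-vertex case forcing a non-edge (the $P_3$ clause) and the disjoint case forcing an induced $C_4$ all of whose edges generate $\ell$ --- the organization is genuinely different. The paper proceeds by induction on $|F|$, growing a complete bipartite graph one edge at a time and relying on a careful ordering (``we chose preferably an edge incident to the current bipartition'') to rule out the bad case of an edge disjoint from the partial bipartition. You instead define the bipartition $(A,B)$ globally from a single reference edge $uv$ and verify the four needed properties directly. Your step (iii) is indeed the only delicate point and your treatment is sound: for $a\in A\setminus\{u\}$ and $b\in B\setminus\{v\}$, point (ii) guarantees that $va$ and $ub$ are vertex-disjoint, so Proposition~\ref{prop:eqlines1} puts them in an induced $C_4$, and since the $P_3$ clause applied to the pairs $\{uv,va\}$ and $\{uv,ub\}$ eliminates the chords $ua$ and $vb$, the $C_4$ must close through $ab$, placing $ab$ in $F$. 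What your version buys is the elimination of the ordering argument in the inductive step (which in the paper is stated rather tersely); the cost is the explicit bookkeeping of disjointness in (ii) and (iii), which you carry out. Both routes land on the same conclusion: the endpoints of $F$ split as $A\cup B$ with every $A$--$B$ pair an edge of $F$ and no edge inside $A$ or inside $B$, i.e.\ an induced complete bipartite subgraph.
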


\begin{proof}
  We prove this by induction on the size of $F$. If $F$ is a
  single edge it is trivially an induced bipartite complete
  subgraph.

  Now if the first $k$ edges of $F$ induce a bipartite complete
  subgraph (bipartition $(U,W)$). Let us pick a new edge $e=uv$ in
  $F$. By Proposition~\ref{prop:eqlines1}, it must be adjacent or
  in a $C_4$ with all previous edges. We chose preferably an edge
  incident to the current bipartition.

  Assume $e$ is incident to one of the current bipartition (let us
  say $u$ is in $U$). Then if there is another vertex $u'$ in $U$
  pick any vertex $w$ in $W$. Edges $e$ and $u'w$ must be in
  $\beta$-relation and induce a $C_4$ of $\ell$-generators. So
  that $v$ can be added to $W$ safely.

  Now, if $e$ is incident to no vertex in $(U,W)$, then it is in
  $\beta$-relation with any of the former edges. So there is a
  $C_4$ and it contains an untreated edge incident to $U$ or $W$
  which contradicts our choice of $e$.

  In the end, a class of edges generating a non-universal line
  forms an induced bipartite complete graph in $G$.
\end{proof}

\section{Structure of $\LL_2$ generators: stars}

\label{sec:l2}
In this section, we follow the same study as for previous
section. Our main result is that in the case of $C_4$-module-free
graphs (which will be our only remaining case in the end), the set
of all good pairs generating a specific line from $\LL_2$ is
arranged such that some vertex is shared by all those
pairs.

We start with the $\LL_2$-version of Lemma~\ref{lem:1line} in
order to see how vertices not in a line can be attached to a good
pair generating this line.

\begin{lemma}
  \label{lem:2line}
  Let $G$ be a \{hole\}-free connected graph and let $uv$ be a
  good pair in $G$. Then, for every vertex $z$ not in line
  $\Line{uv}$, there is a vertex $c$ between $u$ and $v$ such that
  $c$ is on both a shortest $zu$-path and a shortest $zv$-path.
\end{lemma}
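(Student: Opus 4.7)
The plan is to adapt the proof of Lemma~\ref{lem:1line} to the good pair setting. By Lemma~\ref{lem:distgood}, $d_G(z,u) = d_G(z,v) =: k$; set
\[
W := I(z,u) \cap I(z,v),
\]
which contains $z$ and consists entirely of vertices equidistant from $u$ and $v$. Pick $w \in W$ minimizing $m := d_G(w,u) = d_G(w,v)$; the conclusion amounts to the statement $m = 1$. Assume for contradiction $m \geq 2$, and fix shortest paths $P_u : u = u_0, \ldots, u_m = w$ and $P_v : v = v_0, \ldots, v_m = w$. The minimality-of-$w$ argument used in the proof of Lemma~\ref{lem:1line} still applies and shows $P_u$ and $P_v$ share only the endpoint $w$.

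Let $c$ be a common neighbor of $u$ and $v$ with $\Line{cu} = \Line{cv}$, granted by the good pair property; by Lemma~\ref{lem:eqgood} this equality in fact holds for every common neighbor. A direct distance computation rules out $c \in P_u \cup P_v$: the only candidates are $c = u_1$ or $c = v_1$, and in either case $c$ is a common neighbor of $u,v$ at distance $k-1$ from $z$, hence an element of $W$ at distance $1$ from $u$, contradicting the minimality of $m$. Therefore
\[
C := u, u_1, \ldots, w, \ldots, v_1, v, c, u
\]
is a cycle of length $2m + 2 \geq 6$, and the desired contradiction will come from hole-freeness of $G$ together with the good pair identity.

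The last step will be a chord analysis of $C$. There are no chords inside $P_u$ or $P_v$; cross-chords $u_i v_j$ satisfy $|i - j| \leq 1$ (by the shortest-path argument of Lemma~\ref{lem:1line}) and must have $i,j \geq 1$ (else a common neighbor of $u,v$ lands in $W$, again contradicting the minimality of $m$); chords incident to $c$ can only reach $u_i$ or $v_j$ with $i, j \leq 2$ by the triangle inequality through $c$, and the chord $cw$ is excluded by the same $W$-argument. The hard part will be to verify that, within this restricted inventory, every chord configuration either leaves an induced sub-cycle of length at least $5$ inside $C$ — a forbidden hole — or falls into a $C_4$-type configuration (for instance, $cu_1$ and $u_1 v_1$ both present while $cv_1$ is absent) which is then ruled out by a direct computation: one exhibits a vertex — here $v_1$ with $d_G(c, v_1) = 2$, $d_G(v, v_1) = 1$, $d_G(u, v_1) = 2$ — that lies on $\Line{cv}$ but not on $\Line{cu}$, violating the good pair equality. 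For $m \geq 3$ the case analysis grows, but any cross-chord or $c$-chord extracts a shorter cycle on which the same reasoning applies, so the full argument can be packaged as a small induction on $m$.
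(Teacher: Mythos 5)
Your skeleton is the paper's own: form $W=I(z,u)\cap I(z,v)$, take $w\in W$ minimizing $d_G(w,u)$, assume $m\ge 2$, build the two shortest paths meeting only at $w$, close a cycle through a middle vertex $c$, and derive a forbidden hole. The steps you actually execute (equidistance via Lemma~\ref{lem:distgood}, disjointness of $P_u$ and $P_v$, exclusion of $c=u_1$ or $c=v_1$) are correct. But the proof stops exactly where the work begins: the chord analysis of $C$ is announced as ``the hard part'', only one sample configuration is treated, and the general case is delegated to a ``small induction on $m$'' that is never set up. As written there is no argument showing that every admissible chord pattern yields an induced cycle of length at least~$5$; with the inventory you allow (off-level chords $u_iv_{i\pm 1}$, plus chords from $c$ to $u_1,u_2,v_1,v_2$), this is a genuine multi-case verification, not a routine one.

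The gap closes once you apply Lemma~\ref{lem:distgood} and the minimality of $w$ to the path vertices themselves, which is what the paper does. Since $v_i\in I(z,v)$, one has $d_G(u,v_i)\in\{i,i+1,i+2\}$ for $1\le i\le m-1$; the value $i$ would place $v_i$ in $W$ strictly closer to $u$ than $w$, and the value $i+1$ would make $v_i$ non-equidistant from $u$ and $v$ while outside $\Line{uv}$, contradicting Lemma~\ref{lem:distgood}. Hence $d_G(u,v_i)=i+2$ (and symmetrically $d_G(v,u_i)=i+2$). This single computation kills all off-level chords, shows that $c$ (adjacent to both $u$ and $v$) has no neighbour on $P_u\cup P_v$ and is not adjacent to $w$, and leaves only same-level chords $u_iv_i$ as candidates; the same-level chord closest to $u$ (or, if none exists, the whole cycle) together with $u$, $c$, $v$ then bounds an induced cycle of length at least~$5$, the desired hole. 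Note that your sample ``$C_4$-type configuration'' presupposes $d_G(u,v_1)=2$, which the computation above shows is already impossible --- a symptom of the missing preliminary step. Either prove these chord restrictions first, or carry out in full the case analysis you deferred; as it stands the proof is incomplete.
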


\begin{proof}
  By Lemma~\ref{lem:distgood}, $z$ is equidistant from $u$ and
  $v$. Moreover this distance is at least 2 (otherwise $z$ is in
  $\Line{uv}$). Now let $W$ be the set of vertices lying on both a
  shortest $zu$-path and a shortest $zv$-path:
  \begin{equation*}
    W := I(z,u) \cap I(z,v).
  \end{equation*}
  Note that $W$ is not empty (it contains $z$) and all elements of
  $W$ are equidistant from $u$ and $v$. Now let $w$ be an element
  of $W$ with minimum distance to $u$. If this distance is 1, then
  $w$ is a middle vertex and we have our conclusion. Now assume
  that $d_G(w,u) = k$ with $k \geq 2$.  We obtain two shortest
  paths $P_{uw} = (u,u_1,\ldots,u_{k-1}, w)$ and $P_{vw} =
  (v,v_1,\ldots, v{k-1}, w)$. Observe that they have to be
  disjoint (except for $w$) otherwise it would contradict the
  minimality of $d_G(w,u)$. For the same reason the only possible
  edges between $P_{uw}$ and $P_{vw}$ are on the same level
  ($u_iv_i$ for $1 \leq i \leq k-1$). Let $c$ be a middle vertex
  of $uv$, then $wc$ is not an edge (otherwise $c$ is in $W$ and
  strictly closer to $u$ than $w$). Furthermore, $c$ has no
  neighbour in either of $P_{uw}$ and $P_{vw}$. Indeed, the only
  candidates (for distance reasons) would be $u_{k-1}$ or
  $u_{k-2}$. In the latter case, $c$ would be in $W$ and this
  would contradict the choice of $w$. In the first case, then
  $u_{k-1}$ would be at distance exactly 1 from $u$ and exactly 2
  from $v$ and thus not part of line $\Line{uv}$ but this would
  contradict Lemma~\ref{lem:distgood}.

  In the end consider the ``horizontal'' chord (of the form
  $u_iv_i$) which is closest to $u$. The cycle it forms with $ucv$
  is induced and of length at least $5$ which is a contradiction
  since $G$ is $\{hole\}$-free.
\end{proof}

Now, observe that in its very definition, the interior of a good
pair $uv$ is more than just $\{u,v\}$ (it must contain at least
one common neighbour of $u$ and $v$). Thus, such a pair cannot be
$\beta$-related (see Subsection~\ref{sub:aboulker}). So two good
pairs which generate the same line are either $\alpha$-related or
$\gamma$-related. The next proposition ensures that when we have a
$\gamma$-relation, the graph must contain a $C_4$-module. It is of
interest to us since the minimal counter-examples
for~\eqref{eq:prop} cannot contain a $C_4$-module.

\begin{proposition}\label{prop:gammal2}
  Let $G$ be a graph in $\HH$ and $uv$ and $xy$ two good pairs
  generating the same line. If $uv$ and $xy$ are $\gamma$-related,
  then $\{u,v,x,y\}$ is a $C_4$-module in $G$.
\end{proposition}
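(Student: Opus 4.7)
The plan is to unpack the $\gamma$-relation and then separate the outside vertices into two cases based on whether they lie on the common line or not.

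First, I would extract the consequences of the $\gamma$-relation. By definition, $(u,x,v,y)$ is a parallelogram with $\Line{uv}=I(u,v)=I(x,y)=\Line{xy}$. Since $uv$ is a good pair, $d_G(u,v)=2$, and Property~\ref{prop:parallelogram} gives $d_G(x,y)=2$ and $d_G(u,x)=d_G(v,y)$ as well as $d_G(u,y)=d_G(v,x)$. The betweenness conditions $x\in I(u,v)$ and $u\in I(y,x)$ force $d_G(u,x)+d_G(x,v)=2$ and $d_G(y,u)+d_G(u,x)=2$, from which all four distances $d_G(u,x), d_G(x,v), d_G(v,y), d_G(y,u)$ equal $1$. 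Thus $\{u,v,x,y\}$ induces a $4$-cycle $u$–$x$–$v$–$y$–$u$ (the two diagonals $uv$ and $xy$ are at distance~$2$, hence non-edges, so the $C_4$ is induced).

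Next I would verify the module property for an arbitrary vertex $z\notin\{u,v,x,y\}$, splitting on whether $z\in\Line{uv}$. If $z\in\Line{uv}=I(u,v)$, then since $d_G(u,v)=2$, being between $u$ and $v$ means $z$ is a common neighbour of $u$ and $v$; symmetrically $z\in\Line{xy}=I(x,y)$ makes $z$ a common neighbour of $x$ and $y$. So $z$ is complete to $\{u,v,x,y\}$. If $z\notin\Line{uv}$, then Lemma~\ref{lem:distgood} applied to the good pair $uv$ gives $d_G(z,u)=d_G(z,v)$, and the same lemma applied to $xy$ gives $d_G(z,x)=d_G(z,y)$. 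Now if $z$ were adjacent to any one of the four vertices, say to $u$, then $d_G(z,u)=1$ and thus $d_G(z,v)=1$; but then $z$ is a common neighbour of $u$ and $v$, which places $z$ in $I(u,v)\subseteq\Line{uv}$, contradicting the case assumption. The same argument rules out adjacency to $v$, $x$ or $y$, so $z$ is anticomplete to $\{u,v,x,y\}$.

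Together these two cases show every vertex outside $\{u,v,x,y\}$ is either complete or anticomplete to it, so $\{u,v,x,y\}$ is a $C_4$-module as claimed. I do not foresee a real obstacle: the proof is essentially a bookkeeping exercise combining the definition of $\gamma$-relation, Property~\ref{prop:parallelogram}, and Lemma~\ref{lem:distgood}; notably, neither the house-free nor hole-free hypothesis is actually needed for the argument to go through, although we state the proposition under the blanket assumption $G\in\HH$.
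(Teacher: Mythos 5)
Your proof is correct, and it reaches the conclusion by a slightly different decomposition than the paper. Both arguments first derive the induced $C_4$ from the parallelogram conditions and the equalities $\Line{uv}=I(u,v)=I(x,y)=\Line{xy}$ (the paper phrases this as ``$x$ and $y$ are common neighbours of $u$ and $v$ and reciprocally,'' which is the same computation). For the module property, however, the paper first invokes Lemma~\ref{lem:eqgood} to show that all four edges of the $C_4$ generate one common line $\ell\in\LL_1$, and then runs a three-way case analysis on the adjacency pattern of an external vertex $z$ (common neighbour of two consecutive vertices, of two opposite vertices, or of exactly one vertex). You instead split only on whether $z$ belongs to the common line $\Line{uv}$: if it does, $z\in I(u,v)\cap I(x,y)$ forces completeness; if it does not, Lemma~\ref{lem:distgood} applied to both good pairs forces anticompleteness, since any single adjacency would propagate to its antipode and land $z$ back in $I(u,v)$ or $I(x,y)$. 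Your dichotomy is cleaner, dispenses with Lemma~\ref{lem:eqgood} entirely, and makes it transparent (as you note) that neither the house-free nor the hole-free hypothesis is used --- a feature the paper's proof shares but does not advertise. The only thing the paper's route buys is the explicit observation that the four $C_4$ edges generate a single line of $\LL_1$, but that fact is not needed for the statement being proved.
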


\begin{proof}
  By the definition of $\gamma$-relation, line $\Line{uv}=I(u,v)$
  and $\Line{xy} = I(x,y)$. Since $u$ and $v$ are at distance 2,
  vertices $x$ and $y$ must be common neighbours of $u$ and $v$
  (and reciprocally). Thus, those four vertices induce a
  4-cycle. Moreover, since these are both good pairs, by
  Lemma~\ref{lem:eqgood}, $\Line{ux} = \Line{vx}$ and also
  $\Line{uy} = \Line{vy}$. Similarly, $\Line{ux} = \Line{uy}$ so
  that all these edges generate the same line $\ell$ (line $\ell$
  is in $\LL_1$).

  Let us prove that those four vertices form a module. Let $z$ be
  a distinct fifth vertex.
  \begin{itemize}
  \item If $z$ is in the common neighbourhood of two consecutive
    vertices on the 4-cycle, then $z$ is not in $\ell$. But then
    it has to be adjacent to both other vertices of the 4-cycle.
  \item If $z$ is in the common neighbourhood of two opposite
    vertices of the 4-cycle (say $u$ and $v$), then it is in
    $\Line{uv}$ so it must be in $\Line{xy}$. But we know that
    $\Line{xy}=I(x,y)$, so $z$ is also adjacent to both $x$ and
    $y$.
  \item Finally, if $z$ sees exactly one vertex of the four cycle
    (say $u$). Then it must be in $\Line{uv}$ by
    Lemma~\ref{lem:distgood}, and since $\Line{uv} = I(u,v)$, it
    must be adjacent to $v$, a contradiction.
  \end{itemize}
So any such vertex $z$ is either complete or anti-complete
to $\{u,v,x,y\}$.\end{proof}

We later prove that a minimal counterexample to \eqref{eq:prop}
does not have a $C_4$-module (see Proposition~\ref{prop:red}). So
we focus on the $\alpha$-relation. We prove, as for lines in
$\LL_1$, that $\alpha$-related pairs need to intersect.

\begin{proposition}
  \label{prop:alphal2}
  Let $G$ be a graph in $\HH$ with no universal edge and two good
  pairs $uv$ and $xy$ generating the same non-universal line. If
  $uv$ and $xy$ are $\alpha$-related, then $\{u,v,x,y\}$ has
  cardinality $3$.
\end{proposition}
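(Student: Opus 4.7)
The plan is to suppose, for a contradiction, that $|\{u,v,x,y\}| = 4$ and derive a contradiction from each possible ordering of these four vertices along the shortest path $P$ supplied by the $\alpha$-relation. Since $d_G(u,v) = d_G(x,y) = 2$ and distances along $P$ equal graph distances, the ordering is, up to the symmetries $u \leftrightarrow v$, $x \leftrightarrow y$, and reversal of $P$, one of two types: (I) sequential $u, v, x, y$ with $k := d_G(v,x) \geq 1$ and $d_G(u,y) = k+4$; or (II) interleaved $u, x, v, y$, a shortest path of length $3$ with edges $ux$, $xv$, $vy$.

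Case (II) I would dispatch using Corollary~\ref{coro:bipl1}. Here $x$ is a common neighbour of $u$ and $v$, hence a middle vertex of the good pair $uv$, so Lemma~\ref{lem:eqgood} gives $\Line{ux} = \Line{xv}$, and symmetrically $\Line{xv} = \Line{vy}$. Thus $ux, xv, vy$ all generate a common $\LL_1$-line $\ell$, which is non-universal by hypothesis. Corollary~\ref{coro:bipl1} then asserts that the edges generating $\ell$ form an induced complete bipartite subgraph of $G$. Placing $u, v$ on one side of the bipartition and $x, y$ on the other forces $uy$ to be an edge of $G$, contradicting $d_G(u,y) = 3$.

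For case (I), let $a$ and $b$ be the unique middle vertices of $uv$ and $xy$ appearing on $P$, so that $P$ reads $u, a, v, q_1, \ldots, q_{k-1}, x, b, y$. I would pick $z \notin \Line{uv}$ (which exists by non-universality); Lemma~\ref{lem:distgood} applied to both good pairs yields $d_G(z,u) = d_G(z,v) =: m$ and $d_G(z,x) = d_G(z,y) =: m'$, both at least $2$. Lemma~\ref{lem:2line} then supplies middle vertices $c$ of $uv$ and $c'$ of $xy$ together with shortest $zc$- and $zc'$-paths of lengths $m-1$ and $m'-1$; concatenating these yields a second $uy$-walk of length $m+m'$, forcing $m+m' \geq k+4$. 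Combined with $P$, this produces a long cycle $\Gamma$ in $G$. I would then argue that the two $uy$-paths are internally disjoint, that every chord of $\Gamma$ between them is ``horizontal'' (joins vertices at equal distance from $u$), and finally apply the Roof Lemma~\ref{lem:roof} at the $u$- and $y$-ends of $\Gamma$ to produce either an induced house or an induced hole, contradicting $G \in \HH$.

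The main obstacle is this chord analysis in case (I). Subcases arise according to whether the middle vertex $c$ coincides with $a$ (respectively $c'$ with $b$), and according to whether the via-$z$ walk is itself a shortest $uy$-path, i.e.\ whether $m + m' = k+4$. In each configuration one must identify the appropriate sub-cycle on which to invoke the Roof Lemma, combining the good-pair equidistance (Lemmas~\ref{lem:distgood} and \ref{lem:eqgood}) with Lemma~\ref{lem:1line} to rule out any chord pattern that might otherwise escape both the house and the hole.
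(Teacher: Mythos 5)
Your reduction to the two orderings $uxvy$ and $uvxy$ matches the paper, and your treatment of the interleaved case is correct and genuinely different from the paper's: the paper picks a common neighbour $z$ of $x$ and $v$ outside the non-universal line $\Line{xv}$ and forces $z$ adjacent to both $u$ and $y$, whereas you route through Lemma~\ref{lem:eqgood} and Corollary~\ref{coro:bipl1} to force the edge $uy$ directly from the complete-bipartite structure of the $\ell$-generators. That is a clean and valid shortcut (the hypothesis that $G$ has no universal edge is exactly what lets you invoke the corollary).

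The sequential case, however, contains a genuine gap: everything after ``produces a long cycle $\Gamma$'' is a description of work still to be done, not an argument, and the setup you chose makes that work harder than it needs to be. Keeping $z$ at an arbitrary distance $m$ from $\{u,v\}$, you only obtain $m+m'\geq k+4$; when this inequality is strict the two $uy$-routes have different lengths, the levels do not align, and the notion of a ``horizontal'' chord (let alone the claim that all chords are horizontal) is not even well defined. The paper's key move, which your sketch misses, is a normalization: having found $c\in I(z,u)\cap I(z,v)$ via Lemma~\ref{lem:2line}, replace $z$ by the neighbour of $c$ on a shortest $zc$-path, so that the new $z$ is still outside $\Line{uv}$ but now at distance exactly $2$ from $u$ and $v$. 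Applying Lemma~\ref{lem:2line} again to this $z$ and the pair $xy$ yields $c'$ and a predecessor $z'$ at distance $l$ from $z$, and a short double counting ($d(z,x)\leq k+2$ via $c,v$ versus $l+2$ via $c'$, and $d(c,c')\geq k+2$ versus the route through $z$ of length $l+2$) pins down $l=k$ exactly. With equal lengths the cycle $c,v,\ldots,x,c',z',\ldots,z,c$ of length $2k+4$ can be shown chordless by elementary distance arguments (a chord would put $z$ in $\Line{xy}$ or $z'$ in $\Line{uv}$), giving a hole unless $k=0$ --- no Roof Lemma and no subcase analysis on whether $c=a$ or $c'=b$ is needed. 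As written, your case (I) is a plan whose hardest step is acknowledged but not executed, so the proof is incomplete.
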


\begin{proof}
  By the definition of $\alpha$-relation, there is a shortest path
  going through all those vertices. Since both pairs are at
  distance 2, there are essentially two cases, either there is
  shortest path visiting the four vertices in the order $uxvy$, or
  in the order $uvxy$.

  The first case implies that $uxvy$ induces a path on 4
  vertices. Observe edge $xv$. Since it is not universal, there
  must be a vertex $z$ in the common neighbourhood of $x$ and
  $v$. But since $xy$ is a good pair, $\Line{xv} = \Line{vy}$ and
  $z$ must be a neighbour of $y$ (otherwise it is in
  $\Line{vy}$). Similarly, $z$ is a neighbour of $u$. But then
  the distance between $u$ and $y$ is 2 and the path $uxvy$ is not
  a shortest path, a contradiction.

  Now for the second case, without loss of generality, we may
  assume that there is a shortest path from $u$ to $y$ that goes
  through $u,v,x,y$ in that order. We shall prove that $v =
  x$. For this, let $k$ denote the distance between $v$ and
  $x$. Since $\Line{uv}$ is not universal, by
  Lemma~\ref{lem:2line}, there is a vertex $z$ not in
  $\Line{uv}$ and a vertex $c$ between $u$ and $v$ such that $c\in
  I(z,u)\cap I(z,v)$. We may take $z$ as a neighbour of $c$. Now
  by applying this same proposition to $z$ and $xy$, there is a
  vertex $c'$ between $x$ and $y$ such that $c'$ is in $I(z,x)$
  and in $I(z,y)$. Let $z'$ be the predecessor of $c'$ on a
  shortest $zc'$ path. We let $l$ denote the distance between $z$
  and $z'$ (see Figure \ref{fig:propalphal2}).

  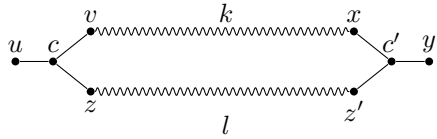
\begin{figure}[H]
    \center
    \begin{tikzpicture}
      \node[vertex] (u) at (-3.5,0.6) {};
      \node[above] at (u) {$u$};
      \node[vertex] (c) at (-3.0,0.6) {};
      \node[above] at (c) {$c$};
      
      \node[vertex] (v) at (-2.5,1) {};
      \node[above] at (v) {$v$};
      \draw (u)--(c)--(v);
      \node[vertex] (y) at (2,0.6) {};
      \node[above] at (y) {$y$};
      \node[vertex] (x) at (1,1) {};
      \node[above] at (x) {$x$};
      \node[vertex] (c1) at (1.5,0.6) {};
      \node[above] at (c1) {$c'$};
      
      \draw (x)--(c1)--(y);
      \draw[spath] (v)--(x);
      
      \node[vertex] (z) at (-2.5,0.2) {};
      \node[below] at (z) {$z$};
      \draw (z)--(c);
      
      \node[vertex] (z1) at (1,0.2) {};
      \node[below] at (z1) {$z'$};
      \draw (z1)--(c1);
      \draw[spath] (z)--(z1);
      \node (l) at (-0.7,0){};
      \node[below] at (l) {$l$};
      \node (k) at (-0.7,1){};
      \node[above] at (k) {$k$};
      
  \end{tikzpicture}
  \caption{When $u,v,x,y$ are in a shortest path in that order.
  }\label{fig:propalphal2}
\end{figure}
  Now, the shortest path from $z$ to $x$ going through $c'$ has
  length $l+2$ but there is another path from $z$ to $x$ of length $k+2$ so $l \leq   k$. 
  Then, the shortest path from $c$ to $c'$ has length $k+2$
  but there is a path through $z$ of length $l+2$ so that we may
  conclude that $k = l$.

  Observe that there is no chord from the top path to the bottom
  path. Vertical chords would create a shortcut from $z$ to
  $x$. And non-vertical chords would either put $z$ in $\Line{xy}$
  or $z'$ in $\Line{uv}$. In the end, this is a induced cycle of
  length $2k+4$. The only way to avoid a hole is to have $k = 0$.
\end{proof}

Now we can prove that in our potential minimal counter-examples, a
set of good pairs generating the same line are in special
configuration. We say that a set of good pairs is an
\emph{extended star} if all pairs share a common vertex. We call
this vertex the center.

\begin{corollary}
  \label{coro:starl2}
   If a graph $G$ in $\HH$ has no universal edge and does not
   admit a $C_4$-module, then any set $M$ of good pairs
   generating a given non-universal line is an extended star. 
\end{corollary}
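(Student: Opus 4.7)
The plan is to show that any two pairs in $M$ must share a vertex, and then to deduce from a small Helly-type argument that there is a single common vertex shared by all pairs.

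First, I would take two arbitrary pairs $uv, xy \in M$. Since they generate the same non-universal line, Theorem~\ref{thm:eqlines} tells me they are $\alpha$-, $\beta$-, or $\gamma$-related. The $\beta$-case is immediate to dismiss: it demands $d_G(u,v) = d_G(x,y) = 1$, but good pairs are at distance~$2$. The $\gamma$-case is ruled out by Proposition~\ref{prop:gammal2}, which would force a $C_4$-module, contradicting our hypothesis. Hence the pairs must be $\alpha$-related, and Proposition~\ref{prop:alphal2} yields $|\{u,v,x,y\}| = 3$; that is, the two pairs share exactly one vertex.

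Next, I would invoke a standard Helly-type observation on pairwise-intersecting families of $2$-element sets: if $M$ had no common vertex, then $M$ would contain three pairs forming a triangle $\{a,b\}, \{b,c\}, \{a,c\}$ on three distinct vertices. Indeed, pick any $\{a,b\} \in M$; since $a$ is not common to all pairs, some pair avoids $a$ and, meeting $\{a,b\}$, must have the form $\{b,c\}$. Similarly some pair avoids $b$, and meeting both previous pairs forces it to be $\{a,c\}$.

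Finally, I would rule out such a triangle using distances. All three pairs are good, so $d_G(a,b) = d_G(b,c) = d_G(a,c) = 2$, and each of $a,b,c$ lies on the common line $\ell$. Collinearity of $a,b,c$ requires one of them to be between the other two, yielding an equality of the form $d_G(\cdot,\cdot) = d_G(\cdot,\cdot) + d_G(\cdot,\cdot)$, which specializes to $2 = 2 + 2$---impossible. Hence $M$ must have a common vertex, i.e., it is an extended star. The only non-trivial step is the small combinatorial extraction of a triangle together with the distance contradiction that kills it; everything else is a direct invocation of the previously established propositions.
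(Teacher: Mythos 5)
Your proof is correct and follows essentially the same route as the paper: rule out $\beta$- and $\gamma$-relations (the latter via Proposition~\ref{prop:gammal2} and the no-$C_4$-module hypothesis), apply Proposition~\ref{prop:alphal2} to get pairwise intersection, and then derive a contradiction from a ``triangle'' of pairs on three mutually distance-$2$ vertices, since $c\in\Line{ac}=\ell=\Line{ab}$ would force $a,b,c$ to be collinear. The paper's version is just a terser phrasing of the same Helly-style extraction and distance contradiction.
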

 
 \begin{proof}
   Let $M = \{ (x_1,y_1), (x_2, y_2), \ldots, (x_k, y_k) \}$ such
   that for each $i$, the good pair $(x_i, y_i)$ generates the
   same line.

   Note that a single pair is an extended star by
   definition. Since the graph does not admit a $C_4$-module, any
   two good pairs are $\alpha$-related ($\beta$ and $\gamma$
   relations have been ruled out by previous
   discussion). Moreover, by Proposition~\ref{prop:alphal2}, any
   two good pairs in $\alpha$ relation share a vertex. In
   particular, we can assume $x_1 = x_2$.

   For a contradiction, assume that $M$ is not an extended star,
   then $k$ must be at least 3 and there exists some index $i$
   such that the good pair $x_iy_i$ does not involve $x_1$ (which
   is equal to $x_2$). Therefore, we may assume that $x_i = y_1$
   and $y_i = y_2$. We deduce that $d(x_1, x_i) = d(x_1, y_i) = 2$
   which implies that $x_1$ is not in the line $\Line{x_i,
     y_i}$. That is a contradiction. Hence, $M$ is an extended
   star.
 \end{proof}


\section{Sets $\LL_1$ and $\LL_2$ are disjoint}

\label{sec:l1l2}

In this section, we prove that $\LL_1$ and $\LL_2$ are disjoint
sets of lines when $G$ is a graph in $\HH$ with no universal line.

\begin{proposition}
  \label{prop:l1l2}
  Let $G$ be a graph in $\HH$ with no universal line, then the set
  of lines induced by edges and the set of lines induced by good
  pairs are disjoint sets.
\end{proposition}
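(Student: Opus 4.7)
The plan is to assume for contradiction that a non-universal line $\ell$ is simultaneously generated by an edge $uv$ and by a good pair $xy$, and to exhibit a single vertex $w$ that both must and cannot lie on $\ell$. Applying Theorem~\ref{thm:eqlines} to the pairs $uv$ and $xy$ immediately rules out the $\beta$-relation (which requires both pairs to be edges) and the $\gamma$-relation (which would force $I(u,v)=I(x,y)$, impossible since $|I(u,v)|=2$ while $|I(x,y)|\geq 3$ by the definition of a good pair). Hence $uv$ and $xy$ must be $\alpha$-related, i.e. some shortest path $P$ visits all four vertices. Since $\ell$ is non-universal, Lemma~\ref{lem:1line} supplies a common neighbour $w$ of $u$ and $v$; the triangle $uvw$ keeps $u$, $v$, $w$ from being collinear, so $w\notin\Line{uv}=\ell$, and Lemma~\ref{lem:distgood} applied to the good pair $xy$ then forces $d_G(w,x)=d_G(w,y)$.

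A two-way case split closes the argument. If $\{u,v\}\cap\{x,y\}\neq\emptyset$, say $u=x$ up to symmetry, then $d_G(w,x)=d_G(w,u)=1$, hence $d_G(w,y)=1$ too, so $w\in N(x)\cap N(y)\subseteq I(x,y)\subseteq\ell$, contradicting $w\notin\ell$. Otherwise $u,v,x,y$ are pairwise distinct, and the middle vertex $m$ of $xy$ on $P$ cannot equal $u$ or $v$ (either equality would drag $v$ or $u$ into $\{x,y\}$); the only remaining layouts on $P$ are that the edge $\{u,v\}$ lies entirely before or entirely after the block $\{x,m,y\}$. Reversing $P$ and swapping labels if necessary, write $P$ as $\ldots,u,v,\ldots,x,m,y,\ldots$ and set $t:=d_G(u,x)\geq 2$, giving $d_G(v,x)=t-1$, $d_G(u,y)=t+2$ and $d_G(v,y)=t+1$. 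The triangle inequality then yields $d_G(w,x)\leq 1+d_G(v,x)=t$ and $d_G(w,y)\geq d_G(u,y)-1=t+1$, contradicting $d_G(w,x)=d_G(w,y)$.

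The main (and only real) subtlety is checking that in the disjoint case the shortest path really forces $\{u,v\}$ and $\{x,m,y\}$ to appear as two consecutive disjoint blocks in one of the two permitted orders; once that is verified, the contradiction reduces to a single distance estimate, so I do not expect any substantial obstacle beyond this routine path-structure analysis.
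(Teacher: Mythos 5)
Your proof is correct, and for the main case it takes a genuinely different and more economical route than the paper's. Both arguments open identically: Theorem~\ref{thm:eqlines} together with the sizes of $I(u,v)$ and $I(x,y)$ forces the two pairs to be $\alpha$-related. The paper then splits according to whether the edge sits inside the good pair (one endpoint shared, the other a middle vertex), handled via Lemma~\ref{lem:eqgood}, or lies outside it; in the latter case it assembles a long cycle from Lemma~\ref{lem:2line} and Lemma~\ref{lem:1line} and invokes the Roof Lemma to reach a forbidden configuration. You instead extract a single witness: a common neighbour $w$ of the edge's endpoints (available via Lemma~\ref{lem:1line} because the line is non-universal), which forms a triangle and hence lies off $\ell$, so Lemma~\ref{lem:distgood} applied to the good pair forces $d_G(w,x)=d_G(w,y)$. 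A shared endpoint then puts $w$ into $N(x)\cap N(y)\subseteq I(x,y)\subseteq\ell$, and in the disjoint case the block structure of the induced shortest path ($u,v$ consecutive, the $xy$-midpoint $m$ distinct from $u$ and $v$) yields $d_G(w,x)\le t$ while $d_G(w,y)\ge t+1$, a clean contradiction. Your version dispenses with the Roof Lemma, Lemma~\ref{lem:2line} and Lemma~\ref{lem:eqgood} entirely; house/hole-freeness enters only through the existence of a triangle over a non-universal edge, so the argument would transfer to any class with that property. The one step that genuinely needs spelling out, namely that $m\ne u,v$ because a shortest path is induced (so $v$ would have to be a path-neighbour of $u=m$, i.e.\ one of $x,y$), you have correctly identified and it is sound.
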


\begin{proof}

  Assume there is a line $\ell$ both in $\LL_1$ and $\LL_2$, and
  let $xy$ be an edge generating $\ell$, and $uv$ a good pair
  generating $\ell$. By Theorem~\ref{thm:eqlines}, those two pairs
  must be $\alpha$-related ($\beta$-relation is excluded because
  $dist(u,v)=2$ and $\gamma$-relation because $I(x,y)$ has order 2 while
  $I(u,v)$ has order at least 3). So there is a shortest path
  containing all those vertices.

  \paragraph{Edge $xy$ is not between $u$ and $v$.}
  For a contradiction assume $y=v$ and $x$ is a middle vertex of
  $uv$. By Lemma~\ref{lem:eqgood}, $\Line{ux}$ is also equal to
  $\ell$. Now since $\ell$ is not universal, there is a vertex $z$
  in the common neighbourhood of $x$ and $y$. Thus it has to be a
  neighbour of $u$ (because it must be out of $\Line{ux}$) and for
  that reason it is in the line $\Line{uv}$ which is a
  contradiction.

  \paragraph{A shortest path from $u$ to $y$.}   
  So we may assume that there is a shortest $uy$-path going
  through $v$ before visiting $xy$. Let $P_{vx}$ be a shortest
  path between $v$ and $x$ and let $k$ be its length.  Since
  $\ell$ is not universal, there is, by Lemma~\ref{lem:2line}, a
  vertex $w$ at distance $2$ from both $u$ and $v$, having as a
  neighbour a middle vertex $c$ of $u$ and $v$.  Since $w$ is not
  in $\ell$, by Lemma~\ref{lem:1line}, there is a vertex $z$ in
  the common neighbourhood of $x$ and $y$ such that $z\in
  I(w,x)\cap I(w,y)$. Let $P_{wz}$ be a shortest path between $w$
  and $z$. Note that vertices of $P_{wz}$ are out of
  $\ell$. Hence, the graph made by $cv$, $P_{vx}$, $xy$, $yz$,
  $P_{zw}$ and $wc$ is a cycle. Moreover, $zx$ is a chord of this
  cycle and vertices $x$ and $y$ have no other neighbors in the
  cycle (otherwise that would create chords in a shortest path,
  impossible). Hence, by the Roof Lemma (Lemma~\ref{lem:roof}),
  there must be an edge from $z$ to the vertex before $x$ in
  $P_{vx}$. But then there is a shortest path from $u$ to $y$
  going through both $v$ and $z$, which put $z$ into $\Line{uv}$,
  a contradiction. 

\end{proof}


\section{Proof of main theorem}

In this last section, we provide a proof of
Theorem~\ref{thm:main}. As mentioned in the introduction, we
proceed by induction and aim to prove the stronger
statement~\eqref{eq:prop} that for every graph $G$ in $\HH$,

\begin{equation*}
   G \text{ has a universal line }\Line{uv} \text{, with
   }d_G(u,v)\leq 2\\ \text{, or at least } n \text{ lines.}
\end{equation*}

\subsection{Induction step}

\label{sec:noC4}

In this section, we prove that a minimum counter-example
to~\eqref{eq:prop} does not contain a $C_4$-module.  To ease the
presentation we call a pair of vertices $u$ and $v$ a {\em 2-pair}
if they are at distance exactly 2 (more general than good
pairs). We say that a 2-pair $uv$ is {\em universal} in $G$ if
$\Line{uv}=V$.

\begin{proposition}
  \label{prop:red}
  Let $G$ be a graph and let $x_0,x_1,x_2,x_3$ be a module of $G$
  inducing a $C_4$ (in the natural order suggested by their
  indices). If $G-x_0$ satisfies \eqref{eq:prop}, then $G$ also
  satisfies \eqref{eq:prop}.
\end{proposition}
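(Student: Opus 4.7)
The structural backbone is that, because $\{x_0,x_1,x_2,x_3\}$ is a module inducing a $C_4$ in the natural order, the vertices $x_0$ and $x_2$ share the same open neighbourhood in $G$: inside the module both see $x_1$ and $x_3$, and the module property forces every outside vertex to be adjacent to both or to neither. So $x_0$ and $x_2$ are false twins. Every shortest path through $x_0$ can be rerouted through $x_2$, so distances in $G$ restricted to $V(G')=V(G)\setminus\{x_0\}$ agree with those in $G'$. Consequently, for every pair $u,v\in V(G')$, the line $\Line{uv}$ computed in $G$ intersected with $V(G')$ equals $\Line{uv}$ computed in $G'$, so distinct lines of $G'$ extend to distinct lines of $G$. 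A direct inspection of the three ways $x_0$ can be collinear with such a pair $u,v$, using $d(u,x_0)=d(u,x_2)$ whenever $u\neq x_2$, yields the key dichotomy: every extension $\Line{uv}$ in $G$ either contains both $x_0$ and $x_2$, contains neither, or contains $x_2$ but not $x_0$; it can never contain $x_0$ while excluding $x_2$.

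I then split on the hypothesis that $G'$ satisfies \eqref{eq:prop}. In the counting case, $G'$ has at least $n-1$ lines, providing at least $n-1$ distinct lines of $G$ by extension. I produce an $n$-th line by choosing a vertex $y\in V(G)\setminus\{x_0,x_1,x_2,x_3\}$ anti-complete to the module: a short distance computation using $d(x_0,x_2)=2$ and $d(x_0,y)=d(x_2,y)\geq 2$ shows $x_2\notin\Line{x_0 y}$ while trivially $x_0\in\Line{x_0 y}$, so by the dichotomy this line is distinct from every extension. If no such anti-complete $y$ exists, then either $V(G)=\{x_0,x_1,x_2,x_3\}$ (so $G$ is the $C_4$ itself and $\Line{x_0 x_1}$ is universal at distance $1$) or every outside vertex is complete to the module (so $\Line{x_0 x_2}=V(G)$ is universal at distance $2$); in every subcase \eqref{eq:prop} holds in $G$.

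In the universal case, let $\Line{uv}=V(G')$ in $G'$ with $d_{G'}(u,v)\leq 2$. The twin dichotomy immediately gives $x_0\in\Line{uv}$ in $G$, hence a universal line in $G$ at the same distance, unless $d(u,v)=2$ and, without loss of generality, $u=x_2$. The core of the argument is to rule out this remaining configuration. Assuming $u=x_2$, $d(x_2,v)=2$ and $\Line{x_2 v}=V(G')$, the universality forces $x_1\in\Line{x_2 v}$, and inspection of the three collinearity cases for a vertex at distance $1$ from an endpoint of a $2$-pair gives $d(x_1,v)\in\{1,3\}$. On the other hand, any common neighbour $w$ of $x_2$ and $v$ either lies in $\{x_1,x_3\}$ or is an outside vertex adjacent to $x_2$, in which case the module property forces $w$ complete to the module; both possibilities produce a path of length at most $2$ from $x_1$ to $v$, so $d(x_1,v)\leq 2$. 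Combining, $d(x_1,v)=1$, and then the module property makes $v$ complete to the module, contradicting $d(x_2,v)=2$. This last impossibility is the main obstacle; everything else is routine twin-and-module bookkeeping.
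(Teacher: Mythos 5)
Your proof is correct and follows essentially the same strategy as the paper's: exploit that $x_0$ and $x_2$ are false twins so that lines of $G-x_0$ lift to distinct lines of $G$ and can never contain $x_0$ without containing $x_2$, rule out a universal $2$-pair at $x_2$ by the module structure, and then exhibit one additional line containing $x_0$ but not $x_2$. The only real divergence is the witness for that extra line: the paper takes $\Line{x_0w}$ where $w$ is obtained from Lemma~\ref{lem:2line} applied to the good pair $x_0x_2$ (so it implicitly uses hole-freeness, even though the proposition is stated for arbitrary graphs), whereas you take $\Line{x_0y}$ for any $y$ anticomplete to the module and dispose of the case where no such $y$ exists by exhibiting a universal line directly; your variant is slightly more elementary and self-contained, and it genuinely proves the statement at the stated level of generality. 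One cosmetic nit: in ruling out the universal $2$-pair at $x_2$, when the common neighbour $w$ equals $x_3$ there is no length-$2$ path $x_1$--$w$--$v$ (since $x_1x_3$ is a non-edge of the $C_4$); the bound $d(x_1,v)\le 2$ there comes instead from $v$ being forced complete to the module, which in fact yields the contradiction immediately.
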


\begin{proof}
  
  Note that the distance between a pair of vertices in $V(G)
  \setminus \{x_0\}$ is the same in $G$ and $G-\{x_0\}$. With this
  fact in mind, observe that:
  \begin{itemize}
  \item if an edge is universal in $G-x_0$, it is universal in $G$,
  \item if a 2-pair $uv$ is universal in $G-x_0$ and if $x_2$ is
    not in $\{u,v\}$, then $uv$ remains a universal 2-pair in $G$,
  \item any 2-pair involving $x_2$ in $G-x_0$ cannot be universal
    (it has to miss $x_1$ and $x_3$).
  \end{itemize}
  
  This means that if $G-x_0$ has a universal edge or a universal
  2-pair, then it is also universal in $G$. Thus, we may assume
  that $G-x_0$ has at least $|V|-1$ distinct lines. Each of these
  lines is carried over to $G$ (either by adding $x_0$ or not) so
  $G$ has at least $|V|-1$ distinct lines.

  Now, if line $\Line{x_0x_2}$ is universal, $G$
  satisfies~\eqref{eq:prop}. Otherwise, it means that there is a
  vertex not in $\Line{x_0x_2}$ and by Lemma~\ref{lem:2line},
  there must be a vertex $w$ connected to a common neighbour $c$
  of $x_0$ and $x_2$ while not being connected to $x_0$ and
  $x_2$. We claim that line $\Line{x_0w}$ is new (meaning it
  cannot be achieved in $G-x_0$). Indeed, it contains $x_0$ but it
  does not contain $x_2$. Thus it cannot be generated by two
  vertices in $V \setminus \{x_0\}$. In the end, $G$
  satisfies~\eqref{eq:prop}.
\end{proof}

\subsection{The final countdown}

\label{sec:count}

To complete the proof, we will show that all graphs in $\HH$ with
no induced $C_4$-module satisfy~\eqref{eq:prop}. Let $G$ be such a
graph on $n$ vertices. If $G$ has a universal edge or a universal
2-pair, then we are down. Actual counting needs to be performed
when there is no such universal pair.

We will only count lines in $\LL_1$ and $\LL_2$. And our proof
will rely on a discharging technique. By
Proposition~\ref{prop:l1l2} those sets $\LL_1$ and $\LL_2$ are
disjoint. Let us assign a weight of 1 to each line in $\LL_1$ and
each line in $\LL_2$. Then the total distributed weight is exactly
$|\LL_1| + |\LL_2|$.

By Corollary~\ref{coro:starl2}, for any line $\ell$ in $\LL_2$,
there is a vertex $u_{\ell}$ which is the center of the extended star formed by  all good pairs that generate $\ell$. We discharge the whole weight of such a line
$\ell$ to this vertex $u_{\ell}$. Let $C$ (for ``centers'') be the
set of vertices which received a weight in this process.

Now let us focus on any line $\ell$ in $\LL_1$. By
Corollary~\ref{coro:bipl1} the set of edges generating $\ell$
induce a complete bipartite subgraph of $G$. Let $X$ and $Y$
denote a bipartition of the involved vertices. We claim that at
most one vertex in $X$ (respectively in $Y$) is not in $C$. For a
contradiction, assume there are two vertices $u$ and $v$ in $X
\setminus C$. Clearly $u$ and $v$ form a good pair (take any
vertex of $Y$ as a middle vertex). Thus, when considering line
$\Line{uv}$ we have transfer a weight of 1 to its center which
must be either $u$ or $v$, reaching a contradiction. Now we
transfer the weight of $\ell$ in two halves, $\frac{1}{2}$ to the
vertex in $X \setminus C$ and $\frac{1}{2}$ to the vertex in $Y
\setminus C$ (if such vertices do not exist, we do not transfer
anything).

In the end, we want to prove that after this process, every vertex
of $G$ has received a weight of 1. It is clear for every vertex in
$C$ (they receive the whole weight of the corresponding line of
$\LL_2$). If a vertex is not in $C$, it receives $\frac{1}{2}$
from every line in $\LL_1$ in which it is involved. Since there is
no universal edge, every edge must support a triangle (see
Lemma~\ref{lem:1line}), and all these edges generate distinct
lines. Thus every vertex is in a triangle and is incident to at
least two edges generating distinct lines. In the end, a vertex
not in $C$ is seen at least twice when scanning lines in $\LL_1$
and thus receives at least a total weight of 1.

We have proved that after transfer, every vertex has received a
weight equal to or larger than 1. So the total initial weight is
at least $n$. This proves that $|\LL_1| + |\LL_2|$ is at least
$n$. So $G$ has at least $n$ distinct lines. This concludes the
proof of~\eqref{eq:prop} for graphs in
$\HH$. Theorem~\ref{thm:main} is a mere corollary of it.

\section*{Discussion}

In the final countdown, we proved that graphs in $\HH$ with no
induced $C_4$-module have either a universal line or sufficiently
many lines generated by edges or good pairs. One may be tempted to
remove the hassle of this $C_4$-module case and try to prove that
edges and good pairs are enough for any graph in $\HH$. It is not
the case. A counterexample is given by the graph obtained by two
disjoint 4-cycle plus a universal vertex. It has nine vertices,
$\LL_2$ is of order~2 and $\LL_1$ is of order~6.

The result presented in this paper solves Problem 3 of Chvátal’s survey~\cite{chvatal_2018}. It would interesting to extends this result to prove that Conjecture 2.3 in~\cite{aboulker_matamala_2018} holds for the class of HH-free graphs. If possible, then Theorem 2.1 in~\cite{aboulker_matamala_2018} would admit a generalization where the class of chordal graphs is replaced by the class of HH-free graphs.

\section*{Acknowledgements}

This research started from a discussion on a péniche in Lyon
during ICGT 2018. It was then continued in Santiago in May 2019
under the patronage of University of Chile and Andrés Bello
National University and completed at École Normale Supérieure in
Paris by Christmas 2019. Authors are grateful to all their hosting
institutions. 


\end{document}